\def\eqref#1{equation~\ref{#1}}
\def\1{\bm{1}}
\DeclareMathAlphabet{\mathsfit}{\encodingdefault}{\sfdefault}{m}{sl}
\SetMathAlphabet{\mathsfit}{bold}{\encodingdefault}{\sfdefault}{bx}{n}
\newtheorem{theorem}{Theorem}[section]
\theoremstyle{definition}
\theoremstyle{remark}
\title{Learning What to Say and How Precisely: Efficient Communication via Differentiable Discrete Communication Learning}
\author{
Aditya Kapoor$^{1}$\thanks{\texttt{Email: aditya.kapoor@postgrad.manchester.ac.uk}},  \
Yash Bhisikar$^{2}$, \
Benjamin Freed$^{3}$, \
Jan Peters$^{4}$, \
Mingfei Sun$^{1}$
\\
\AND
$^{1}$University of Manchester \quad
$^{2}$BITS Goa \quad
$^{3}$Carnegie Mellon University \quad
$^{4}$TU Darmstadt
}
\begin{document}

\maketitle

\begin{abstract}
Effective communication in multi-agent reinforcement learning (MARL) is critical for success but constrained by bandwidth, yet past approaches have been limited to complex gating mechanisms that only decide \textit{whether} to communicate, not \textit{how precisely}. Learning to optimize message precision at the bit-level is fundamentally harder, as the required discretization step breaks gradient flow. We address this by generalizing Differentiable Discrete Communication Learning (DDCL), a framework for end-to-end optimization of discrete messages. Our primary contribution is an extension of DDCL to support unbounded signals, transforming it into a universal, plug-and-play layer for any MARL architecture. We verify our approach with three key results. First, through a qualitative analysis in a controlled environment, we demonstrate \textit{how} agents learn to dynamically modulate message precision according to the informational needs of the task. Second, we integrate our variant of DDCL into four state-of-the-art MARL algorithms, showing it reduces bandwidth by over an order of magnitude while matching or exceeding task performance. Finally, we provide direct evidence for the \enquote{Bitter Lesson} in MARL communication: a simple Transformer-based policy leveraging DDCL matches the performance of complex, specialized architectures, questioning the necessity of bespoke communication designs.
\href{https://github.com/AdityaKapoor74/Multi-Agent-Limited-Comms}{Github Code}
\end{abstract}

\section{Introduction}
\label{sec:intro}

% Paragraph 1: Motivates MARL with communication and the need for efficiency.
Multi-agent reinforcement learning (MARL) has emerged as a powerful framework for training autonomous agents to solve complex tasks \citep{SuperHumanAI_Poker, SensorNetworks_MAS, RL_Robotics, Self_Driving, Autonomous_driving}. In many scenarios, effective communication is crucial for high performance, particularly when individual observations are incomplete or decentralized coordination is required \citep{DecPOMDP}. For this reason, many MARL approaches focus on learning inter-agent communication strategies alongside behavioral policies \citep{foerster2016learningcommunicatedeepmultiagent, multiagentbidirectionallycoordinatednetsemergence, sukhbaatar2016learningmultiagentcommunicationbackpropagation, PDCL}. However, real-world systems possess bandwidth constraints, making the efficient use of the communication network a primary concern. Our goal is to develop agents that maximize group performance while minimizing communication bandwidth.

% Paragraph 2: Clearly defines the problem with existing methods -- "message-level" vs. "bit-level" optimization. This is a key distinction.
Numerous approaches have been proposed for MARL with communication (MARL+Comms), often featuring bespoke neural architectures and training methods \citep{SchedNet, IMAC, CommNet, CommFormer, TarMAC, MAGIC, GAComm, IC3Net}. A major limitation of these approaches is the assumption of high-precision messages, typically 32-bit floating-point vectors \citep{ModelBasedSparseComms, eventtriggeredMARL, DistributedMASLimitedBandwidth}. While this enables training via gradient descent, it leads to excessive bandwidth usage. Some methods \citep{IC3Net, TarMAC, GAComm} attempt to address this with hard attention mechanisms that gate communication, allowing agents to choose \textit{whether} to send a message at a given timestep. While these approaches can reduce bandwidth, they optimize at the \enquote{frequency level}. They do not allow agents to dynamically modulate the \textit{precision} of their messages, only the frequency.

% Paragraph 3: Introduces DDCL as a promising but limited tool. This sets the stage for our contributions.
Differentiable Discrete Communication Learning (DDCL) \citep{CLBDC, SPCL} is a recent approach that provides more fine-grained, bit-level control than prior approaches \citep{antiefficientencodingemergentcommunication, emergencelanguagemultiagentgames}. DDCL uses a stochastic quantization process to map real-valued vectors to variable-length bitstrings, where larger-magnitude vectors map to longer bitstrings. Crucially, the expected message length is a differentiable function of the message vector, allowing bandwidth to be minimized directly via gradient descent. This allows agents to modulate message precision based on their observations. However, prior work on DDCL is limited; it has only been demonstrated in simple tasks and relies on restrictive assumptions that communication vectors must be positive and bounded, constraining its expressivity.

% Paragraph 4: introduce our core contributions, and motivate with a figure.
In this work, we unlock the full potential of DDCL as a general-purpose tool for efficient MARL communication. We begin by generalizing the DDCL framework to support \textbf{unbounded, signed communication vectors}, removing the constraints on communication vector in prior DDCL work and enabling its integration into any MARL agent. We use this generalized framework to demonstrate not just \textit{that} communication can be made more efficient, but \textit{how} agents learn to allocate bandwidth.

% Our contributions are as follows:
% \begin{enumerate}[topsep=0pt, itemsep=-2pt, partopsep=2pt, parsep=2pt, leftmargin=*]
%     \item We first provide a qualitative analysis in an interpretable grid-world environment to demonstrate \textbf{how} our DDCL variant enables agents to learn to dynamically control message precision based on the informational needs of the task.
%     \item We then demonstrate the \textbf{practical utility and generality} of our method by integrating it into four diverse MARL+Comms algorithms IC3Net, TarMAC, GAComm and MAGIC \citep{IC3Net, TarMAC, GAComm, MAGIC}. Across multiple benchmarks, DDCL reduces communication bandwidth by over an order of magnitude while maintaining or improving task performance.
%     \item Finally, we show that a simple, general-purpose Transformer-based policy~\citep{vaswani2023attentionneed, joshi2025transformersgraphneuralnetworks} using DDCL can match or exceed the performance of complex, specialized communication architectures, suggesting that future research should prioritize scalable, general mechanisms over bespoke designs. To the best of our knowledge, we provide the first direct evidence for the \enquote{Bitter Lesson} in MARL communication.
% \end{enumerate}

Our work validates this generalized framework with a threefold contribution. First, we provide a qualitative analysis in an interpretable grid-world environment to demonstrate \textbf{how} our DDCL variant enables agents to learn to dynamically control message precision based on the informational needs of the task. Second, we demonstrate the \textbf{practical utility and generality} of our method by integrating it into four diverse MARL+Comms algorithms---IC3Net, TarMAC, GA-Comm, and MAGIC \citep{IC3Net, TarMAC, GAComm, MAGIC}---and showing that across multiple benchmarks, it reduces communication bandwidth by over an order of magnitude while maintaining or improving task performance. Finally, we show that a simple, general-purpose encoder-only Transformer-based policy~\citep{vaswani2023attentionneed} empowered by DDCL can match or exceed the performance of complex, specialized communication architectures, suggesting that future research should prioritize scalable, general mechanisms over bespoke designs. This provides the first direct evidence for the \enquote{Bitter Lesson} in MARL communication. 

\section{Preliminaries}
\label{sec:preliminaries}

\subsection{Partially Observable Markov Game With Communication}

We model multi-agent problems as a Partially Observable Markov Game (POMG), defined by the tuple $\mathcal{G}=(\mathcal{S},\{\mathcal{A}^{i}\}, \{R^{i}\}, P, \{O^{i}\},\gamma)$. Here, $\mathcal{S}$ is the global state space and $\mathcal{A}^{i}$ is the action space for each agent $i$. At each timestep $t$, the environment is in state $s_t \in \mathcal{S}$. Each agent receives a private, partial observation $o_{t}^{i}\sim O^{i}(s_{t})$ and selects an action $a_{t}^{i}$ based on its policy, $\pi^i$. The joint action of all agents induces a state transition $s_{t+1}\sim P(\cdot|s_t, a_t)$, and each agent receives a reward $r_t^i = R^i(s_t, a_t)$. We focus on the fully cooperative setting, where all agents share a common reward function, i.e., $R^i=R$ for all agents $i$.

In MARL with communication (MARL+Comms), this process is augmented. Before selecting an action, each agent $i$ can generate a message $m_t^i$ based on its private partial observation $o_{t}^{i}$. This message is then broadcast to other agents. The receiving agent $j$ incorporates the incoming message(s) $m_t^i$ into its own observation, forming an augmented observation that its policy uses to select its action $a_t^j$. The core challenge is learning a communication protocol and a behavioral policy simultaneously.

\subsection{Differentiable Discrete Communication Learning (DDCL)}

Practical multi-agent systems require discrete communication protocols due to bandwidth-limited digital channels \citep{PDCL, chen2024discretemessagesimprovecommunication, tucker2022humanagentcommunicationinformationbottleneck}. Learning these protocols presents a dilemma. Treating messages as discrete actions naturally handles discrete channels, but learns inefficiently and often converges to inferior policies. Conversely, treating communication as a differentiable process allows for efficient end-to-end training via backpropagation but traditionally assumes continuous, real-valued messages, making it inapplicable to discrete channels without introducing biased gradient estimators.

Differentiable Discrete Communication Learning (DDCL) \citep{CLBDC, SPCL} was introduced to resolve this conflict. Its core innovation is a stochastic encoding and decoding procedure that makes a discrete channel mathematically equivalent to an analog channel with additive noise, through which unbiased gradients can be backpropagated.

\textbf{The Reparameterization Mechanism.}
The core mechanism for passing a message from a sender to a receiver begins with the sender's policy outputting a real-valued signal $z$, originally assumed to be a scalar in $[0,1]$. The sender perturbs this signal with random noise $\epsilon \sim U(-\delta/2, +\delta/2)$, where $\delta$ is the quantization width, to produce a noisy signal $z' = z + \epsilon$. This noisy signal is then quantized into a discrete integer message $m = \lfloor \frac{(z') \pmod 1}{\delta} \rfloor$, which is sent over the non-differentiable channel. Upon receiving $m$, the receiver reconstructs the signal by computing the center of the quantization bin, $C(m) = \delta(m+1/2)$, and then subtracting the \textit{exact same noise} $\epsilon$ that the sender used, yielding the reconstruction $\hat{z} = C(m) - \epsilon$. This process requires synchronized pseudrandom number generators so that sender and receiver share the value of $\epsilon$. The key result is that this entire non-differentiable pipeline can be reparameterized as the simple noise-addition operation $\hat{z} = z + e$, where $e$ is an independent uniform noise term. Because the reconstruction $\hat{z}$ is simply the original signal $z$ plus independent noise, the partial derivative $\frac{\partial\hat{z}}{\partial z}$ is exactly 1, allowing unbiased gradients to flow from the receiver back to the sender (refer Fig~\ref{fig:DDCL_Proc}).

\textbf{Learning Sparse Communication.}
The DDCL framework was later extended to incentivize sparse communication. This is achieved through two additions: a variable-length code and a differentiable communication cost. The fixed code maps larger integer messages to longer bitstrings, with a special message (e.g., $m=0$) mapping to a null message to represent not communicating. A penalty term is then added to the RL objective to encourage shorter messages. Since the true bit length is not differentiable, a surrogate cost is derived as an upper bound on the expected message length using Jensen's inequality. This results in the communication cost $\mathcal{L}_{comms} = \log_2(|M|z + 1)$ for a signal $z \in [0,1]$, where $|M| = 1/\delta$ is the number of discrete messages. Minimizing this cost encourages the policy to output smaller-magnitude signals for $z$, which are more likely to be quantized to smaller integer values of $m$, resulting in shorter bitstrings.

While the original DDCL framework provided these innovations, it was limited to bounded, positive signals (i.e., $z \in [0, 1]$), which places unwanted architectural constraints on the policy network. In this work, we generalize this framework and its associated communication loss to handle unbounded, real-valued signals, as we detail in Section~\ref{sec:approach}.

\section{Related Works}
\label{sec:related_works}

\textbf{Multi-Agent Reinforcement Learning } (MARL) extends single-agent RL to domains where multiple agents interact in a shared environment, often modeled as a Markov game \citep{MARL_overview}. Unlike single-agent settings, MARL introduces \emph{non-stationarity}, since each agent’s behavior affects the dynamics experienced by others \citep{OpponentModelling}. Approaches to MARL are typically categorized as either \emph{centralized}, \emph{decentralized}, or \emph{centralized training with decentralized execution} (CTDE) \citep{amato2024partial, Oliehoek2016ACI}.  

In centralized MARL, a single policy chooses joint actions for all agents \citep{Dynamics_MARL}, but scaling to many agents is challenging. Decentralized methods let each agent train independently \citep{MarkovGames_MARL}, which can be unstable in larger tasks. The CTDE framework strikes a balance by training with global information while deploying decentralized policies at execution. Value decomposition (VD) methods, such as \emph{QMIX} \citep{QMIX}, \emph{VDN} \citep{VDN}, and \emph{QTRAN} \citep{QTRAN}, factor the joint Q-function to improve credit assignment, whereas actor-critic-based methods \citep{MADDPG,COMA,yu2022surprisingeffectivenessppocooperative,PRD, kapoor2025assigningcreditpartialreward} learn centralized critics and decentralized actors. However, these techniques primarily rely on implicit coordination rather than explicitly modeling communication among agents.

\textbf{Communication in MARL.}
To enhance coordination under partial observability, recent work explicitly incorporates \emph{inter-agent communication}. Early \emph{differentiable communication learning (DCL)} approaches, such as DIAL \citep{DIAL}, permitted backpropagation through discrete messages using deep Q-networks. CommNet \citep{CommNet} introduced continuous channels, and IC3Net \citep{IC3Net} provided a gating mechanism for selective message exchange, but both assume fixed topologies that can be bandwidth-inefficient. TarMAC \citep{TarMAC} employs soft-attention for targeted message passing, while IMAC \citep{IMAC} leverages graph-based scheduling. On the other hand, \emph{reinforced communication learning (RCL)} treats message selection as discrete policy optimization \citep{LanguageEmergence,BiasLanguageEmergence}, offering flexibility but often leading to high sample complexity.  

More recent works propose dynamic communication structures via scheduling or graph optimization. For example, SchedNet \citep{SchedNet} and MAGIC \citep{MAGIC} use structured graphs to adaptively schedule messages, while GA-Comm \citep{GAComm} and CommFormer \citep{CommFormer} learn or refine communication graphs through attention. However, these methods typically assume \emph{continuous} messages or fixed bandwidth models, limiting real-world applicability where the channel is inherently discrete.

\section{Multi-Agent RL with Differentiable Discrete Communications}
\label{sec:approach}

While many MARL+Comms algorithms learn to schedule messages, they typically treat the messages themselves as fixed-precision vectors. This leads to inefficient bandwidth use, as agents cannot modulate the \textit{precision} of their messages to match the informational needs of the current context. DDCL \citep{CLBDC, SPCL} provides a foundation for learning this precision, but its original formulation is limited to bounded signals ($z \in [0, 1]$), restricting the choice of activation functions (e.g., sigmoid) on intermediate policy network layers, limiting its general applicability.

Our primary methodological contribution is to generalize the DDCL framework to support \textbf{unbounded, real-valued signals} ($z \in \mathbb{R}^d$), transforming it into a universal, \enquote{plug-and-play} layer for any MARL architecture. We achieve this because of two key properties of DDCL: the unbiased gradient estimation holds true also for unbounded signals, and the communication loss can be modified to support them. A proof of the unbiased gradient property is provided in \cref{app:proof_of_independence}. The loss derivation and application of our framework are presented in the following sections.

\subsection{A Differentiable Communication Cost for Unbounded Signals}
To encourage agents to communicate sparsely, we need a differentiable penalty on message length. We derive a new communication cost that is an upper bound on the expected bit length for an unbounded, signed integer message $m$. Our derivation adapts the approach from prior work \citep{CLBDC} by first assuming a variable-length code where the bit length can be upper-bounded by $length(m_b) \le \log_2(2|m|+1)$ to account for a sign bit. Second, we note that for our unbounded quantization scheme, the expected magnitude of the discrete message is linearly proportional to the signal's magnitude, $\mathbb{E}[|m| \mid z] = |z|/\delta$. Third, we apply Jensen's inequality to the concave logarithm function, which bounds the expected length as $\mathbb{E}[length(m_b) \mid z] \le \log_2(2\mathbb{E}[|m|\mid z] + 1)$. Finally, by substituting the expected magnitude, we arrive at our per-dimension communication cost. Summing this over all message dimensions, agents, and timesteps gives the final objective, which sums the per-dimension cost over all agents, timesteps, and communication edges $e$:
\begin{equation}
\label{eq:comms_cost}
    \mathcal{L}_{comms} = \sum_{i=1}^N \sum_{t=0}^T \sum_{e \in E} \sum_{k=1}^{d} \log_2 \left( \frac{2|z_{t}^e[k]|}{\delta} + 1 \right).
\end{equation}
In this equation, $z_{t}^e[k]$ denotes the $k$-th element of the signal vector. A formal proof is provided in \cref{app:loss_derivation}. By minimizing this differentiable upper bound, we pressure the agent's policy to produce lower-magnitude signals for $z$, which in turn leads to shorter, sparser discrete messages.

\subsection{Framework Application}
Our generalized DDCL serves as a simple, plug-and-play module for any MARL+Comms algorithm that uses differentiable communication channels. Incorporating it requires only inserting the DDCL quantization procedure (\cref{sec:quantization_procedure}) into the communication channel and adding the communication cost from \cref{eq:comms_cost} as a regularizer to the algorithm's primary loss function, weighted by a hyperparameter $\lambda$.

We apply this framework to our four baseline algorithms. We assume a star-shaped communication graph, where peripheral agents send messages to a central agent which then broadcasts an aggregated signal which is inherent to IC3Net's centralized communication design. 
% Adhering to this structure allows us to isolate the specific impact of our adaptive precision mechanism on the architecture. 
For TarMAC and GA-Comm, which operate on a fully-connected graph, DDCL is applied to every communication round within their respective attention blocks. This involves two message-passing steps per block: first, when agents broadcast their key vectors to all other agents, and second, when the resulting aggregated value vectors are communicated back to each agent. The hard-attention mechanisms in these models serve to prune the graph, and DDCL then operates on the remaining connections. Similarly, for MAGIC, which uses multiple graph attention-based scheduling networks, DDCL is applied to the message-passing within each of its attention mechanisms. 

\section{Experiments}
\label{sec:experiments}

Our experiments are designed to validate our three primary contributions. First, in a controlled setting, we provide a qualitative analysis to build intuition for \textit{how} our generalized DDCL learns to modulate communication precision. Second, we demonstrate DDCL's practical utility and generality by integrating it into a wide range of state-of-the-art MARL+Comms algorithms across several challenging benchmarks. Finally, we provide direct evidence for the \enquote{Bitter Lesson} in MARL communication by showing that a simple, general-purpose architecture empowered by DDCL can compete with complex, bespoke communication models.

% \begin{figure}[t]
%     \centering
%     \begin{subfigure}[b]{0.48\textwidth}
%         \centering
%         \includegraphics[width=\textwidth]{Figures/Freq_vs_comm_rate_sweep_delta_0.200_lr_5e-04_lambda_2e-03_epochs_10.png}
%         \caption{Analysis of the learned communication protocol showing the communication cost (bits) for each goal versus the goal's frequency. The strong negative correlation ($r = -0.989$) indicates a highly efficient, frequency-aware code.}
%         \label{fig:toy_problem_analysis}
%     \end{subfigure}
%     \hfill % This command adds horizontal space between the figures
%     \begin{subfigure}[b]{0.5\textwidth}
%         \centering
%         \includegraphics[width=\textwidth]{Figures/rate_distortion_curve.pdf}
%         \caption{The empirical Rate-Distortion curve. The Rate (bits) and Distortion (1 - Success Rate) trade-off is traced by training with different penalties ($\lambda$). The theoretical optimum is marked with a star.}
%         \label{fig:rate_distortion_curve}
%     \end{subfigure}
%     \caption{Qualitative analysis of the learned communication protocol and its efficiency. (a) shows the learned inverse relationship between message cost and goal frequency. (b) shows the resulting empirical Rate-Distortion frontier. \textcolor{blue}{Make figure to scale and tidy it up}}
%     \label{fig:qualitative_combined}
% \end{figure}

\begin{figure}[t]
    \centering
    \includegraphics[width=\textwidth]{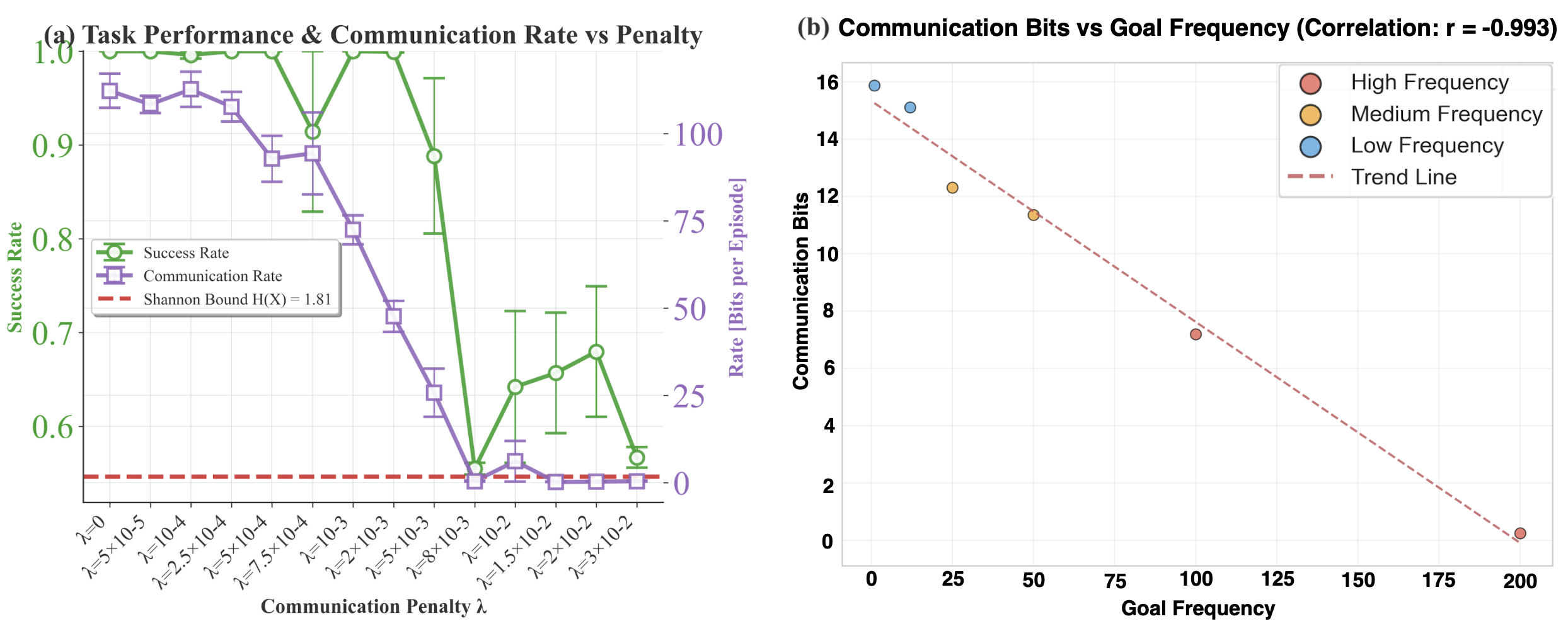}
    \caption{Qualitative analysis of the learned communication protocol in `CommunicatingGoalEnv' toy problem. (a) Plots the Success Rate and Communication Rate against different $\lambda$ values. The episodic plot illustrates a \enquote{lossless compression} regime where the success rate remains perfect (1.0) while the required communication bits are significantly reduced as $\lambda$ increases to $8 \times 10^{-3}$. (b) A per-timestep comparison of the learned communication policy with the ground-truth goal sampling frequency. The strong negative correlation (r=-0.993) demonstrates that the agent learns a frequency-aware code, allocating the fewest bits to the most probable goals.}
    \label{fig:toy_problem_analysis}
\end{figure}

\subsection{Qualitative Analysis of an Emergent Communication Protocol}
\label{sec:qualitative_analysis}

To build intuition for \textit{how} our generalized DDCL framework enables agents to learn efficient communication, we first conduct a qualitative analysis in a controlled, interpretable environment. Our goal is to move beyond simply measuring bandwidth reduction and instead investigate the structure of the learned communication protocol itself.

\textbf{Experimental Setup.}
We use a simple 8x8 grid-world task, \texttt{CommunicatingGoalEnv}, with a stationary \enquote{speaker} and a mobile \enquote{listener}. The information is asymmetric to create a clear communication bottleneck: the speaker observes the raw integer (x,y) coordinates of the goal, while the listener observes its own integer (x,y) coordinates. At the start of each episode, a goal is sampled from a non-uniform distribution defined by the probabilities: `{(0,0):51.5\%, (7,7):25.8\%, (3,4):12.9\%, (4,3):6.4\%, (1,6):3.1\%, (6,1):0.3\%}'. The agents are trained jointly using Multi-Agent Proximal Policy Optimization (MAPPO)~\citep{yu2022surprisingeffectivenessppocooperative} to maximize a cooperative reward for reaching the goal, while the speaker's policy is simultaneously penalized by our communication cost $\mathcal{L}_{comms}$ (Eq.~\ref{eq:comms_cost}) with a coefficient of $\lambda_{comms}=4\mathrm{e}{-3}$. The trained policy achieves a 100\% success rate, allowing our analysis to focus purely on the efficiency of the learned communication protocol. An optimal protocol, akin to a compression algorithm, should assign the shortest bitstrings to the most frequent goals.

% \textbf{Agents Learn a Frequency-Aware Communication Protocol.}
% Our analysis reveals that the agent learns a sophisticated, frequency-aware communication protocol, a key principle of efficient coding. As shown in \cref{fig:toy_problem_analysis}, there is a strong negative correlation of \textbf{r = -0.989} between a goal's frequency and the number of bits the agent allocates to communicate it. This demonstrates a learned variable-length encoding scheme: the agent uses just \textbf{0.36 bits} on average for the most frequent goal while allocating up to \textbf{16.3 bits} for the rarest ones. \textbf{This learned, adaptive protocol stands in stark contrast to the de facto standard in MARL communication literature, where agents typically use fixed-precision messages equivalent to a uniform code.} Such a standard approach would require a fixed $\lceil \log_2(64) \rceil = 6$ for any goal, making our learned protocol over \textbf{16 times more efficient} for the most frequent events. This result confirms that the agent has learned to dynamically modulate message precision based on the informational needs of the task.
\textbf{Agents Learn a Frequency-Aware Communication Protocol.}
Our analysis reveals that the agent learns a sophisticated, frequency-aware communication protocol, a key principle of efficient coding. As shown in \cref{fig:toy_problem_analysis}(b), there is a strong negative correlation of \textbf{r = -0.989} between a goal's frequency and the number of bits the agent allocates to communicate it. This learned, variable-length encoding is highly specialized: the agent allocates just \textbf{0.25 bits} on average to communicate the most frequent goal (occurring 51.5\% of the time), while systematically assigning higher costs to rarer events, culminating in \textbf{15.98 bits} for the least frequent one (0.3\% frequency). \textbf{This vast difference in allocated bits directly reflects the agent's training experience, where it observes the most common goal in over 25,000 episodes but encounters the rarest goal in only 52 episodes.} This quantitative analysis reveals the protocol's sophistication: the agent correctly identifies that the most frequent goal is over 170 times more likely to occur than the rarest one, and in response, it develops a code that is nearly \textbf{64 times more bit-efficient} for this common event. This learned strategy provides a dramatic advantage where it matters most. Compared to the de facto standard in MARL—a fixed-precision uniform code that would require $\lceil \log_2(64) \rceil = 6$ bits for any goal—our agent is \textbf{24 times more efficient} when communicating the most likely event.

\textbf{Analysis of Divergence from the Theoretical Optimum.}
While the learned protocol is highly effective, its average message length of \textbf{4.75 bits} is greater than the theoretical minimum defined by the Shannon Entropy of the source distribution. The entropy, $H(X) = - \sum_{i} p_i \log_2(p_i)$, is calculated from the goal probabilities as $H(X) = - \left( 0.515 \times \log_2(0.515) + \dots + 0.3 \times \log_2(0.3) \right) \approx 1.81 \text{ bits}$. This divergence is an expected and informative consequence of our optimization framework, stemming from three primary factors.

First, the agent minimizes a differentiable surrogate, $\mathcal{L}_{comms}$, which is an upper bound on the true expected message length. This bound is derived using Jensen's inequality 
\begin{equation*}
    \mathbb{E}[\log_2(2|m|+1)] \le \log_2(2\mathbb{E}[|m|]+1).
\end{equation*}
The agent minimizes the term on the right-hand side. Minimizing an upper bound does not guarantee the minimization of the original value; the \enquote{Jensen gap} between the two sides introduces slack into the optimization problem.

Second, our loss, $\mathcal{L}_{comms}(z) = \log_2(\frac{2|z|}{\delta} + 1)$, directly couples communication cost to the L1-norm of the latent vector $z$. An information-theoretically optimal code decouples a symbol's identity from its codeword length, caring only about its probability. Our framework must learn a more complex, indirect mapping: the policy network must learn to map high-probability inputs (frequent goals) to low-magnitude latent vectors $z$. While our results show the agent is successful at this, the indirect mechanism is inherently less direct than optimizing an entropy-based objective.

Finally, the reinforcement learning process itself influences the outcome. For high-frequency goals, the speaker receives abundant learning signals to compress its message representation effectively; for instance, the agent has over \textbf{25,000 episodes} to optimize the encoding for the most frequent goal, but only \textbf{52 episodes} to learn a representation for the rarest one. This data scarcity for rare events means the speaker does not receive enough data to learn an optimally compressed representation and therefore defaults to a higher-cost, but safe and unambiguous, encoding to ensure task success. This explains the high bit cost for low-frequency goals, which contributes to the overall divergence from the Shannon limit. The final learned protocol represents a trade-off: a code that is not only efficient where it matters most, but is also simple enough for a finite-capacity network to learn and use reliably to achieve perfect task success under the data distribution it experienced.

\textbf{The Rate-Distortion Frontier.}
% To fully characterize our framework's performance, we trace the trade-off between task performance and communication efficiency by training separate policies across a wide range of communication cost coefficients, $\lambda$. The resulting empirical Rate-Distortion curve is shown in \cref{fig:rate_distortion_curve}. The curve quantifies the cost of performance: for example, reducing distortion from 0.09 (91\% success) to nearly zero required approximately doubling the communication rate from approximately 40 to 75-100 bits per episode. As the communication penalty $\lambda$ increases (moving from right to left), the agent dramatically reduces its rate, initially with minimal impact on success. However, beyond a critical point, further compression forces the agent to discard useful information, causing a sharp increase in distortion. This result shows that DDCL allows practitioners to position their system on the efficiency-performance spectrum, with $\lambda$ serving as a predictable lever to control this trade-off.
The hyperparameter $\lambda$ serves as a predictable lever to navigate the trade-off between task performance and communication efficiency, effectively tracing out the Rate-Distortion frontier shown in \cref{fig:toy_problem_analysis}(a) of the Toy problem. A quantitative analysis of the curve reveals distinct operational regimes. For small values of $\lambda$ (e.g., $10^{-5}$), the penalty is negligible, and the agent achieves near-perfect performance (Distortion $\approx 0$) at a high communication cost of approximately 110 bits per episode. As $\lambda$ increases to $5 \times 10^{-4}$, the agent enters a ``lossless compression'' phase, cutting its communication rate to ~90 bits while maintaining a perfect success rate. Increasing the penalty further forces the agent into ``lossy'' compression; at $\lambda = 8 \times 10^{-3}$, the rate is reduced to just a few bits, but at a steep cost, with distortion rising sharply to approximately 0.45 (a success rate of only 55\%). Qualitatively, this demonstrates that the framework first learns to discard redundant information. Only when the communication penalty becomes critically high does it begin to discard information essential for coordination, causing a sharp decline in task success.

\begin{figure*}[!htbp]
    \centering
    % Predator-Prey Subfigure
    \begin{subfigure}[b]{\textwidth}
        \centering
        \includegraphics[width=\textwidth]{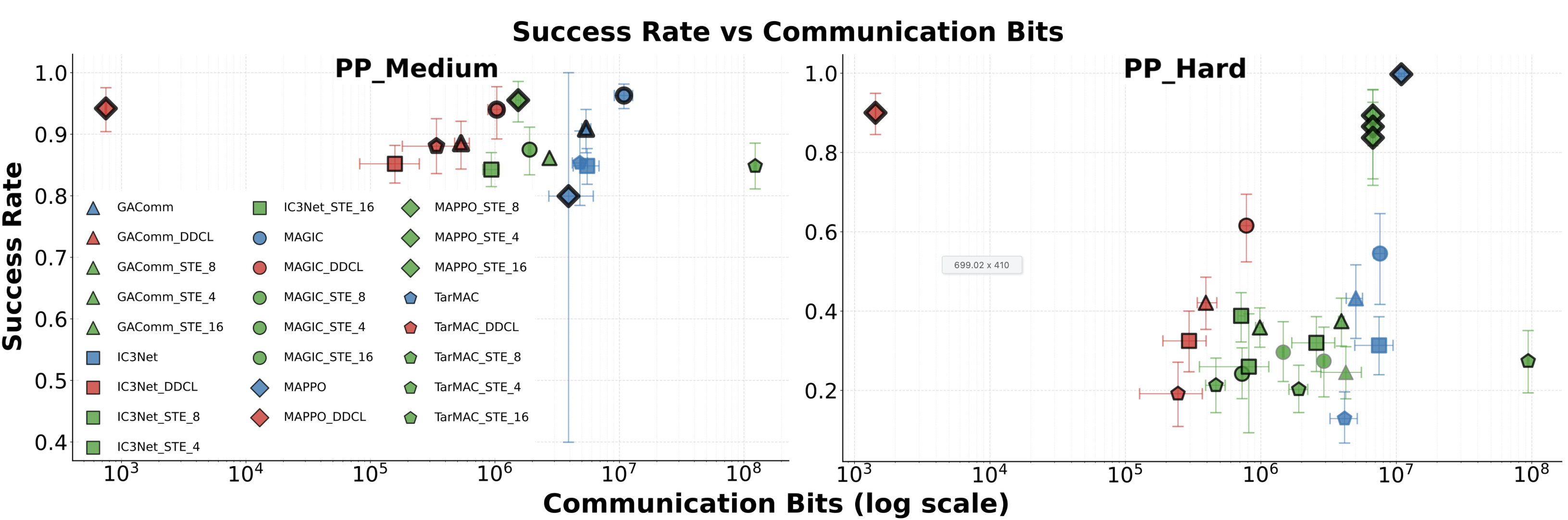}
        \caption{In \textbf{Predator-Prey}, DDCL variants (red) consistently establish the Pareto frontier, often improving the success rate (e.g., TarMAC in Hard) while using orders of magnitude less bandwidth than original (blue) and STE (green) baselines.}
        \label{fig:PP_performance}
    \end{subfigure}

    % Traffic Junction Subfigure
    \begin{subfigure}[b]{\textwidth}
        \centering
        \includegraphics[width=\textwidth]{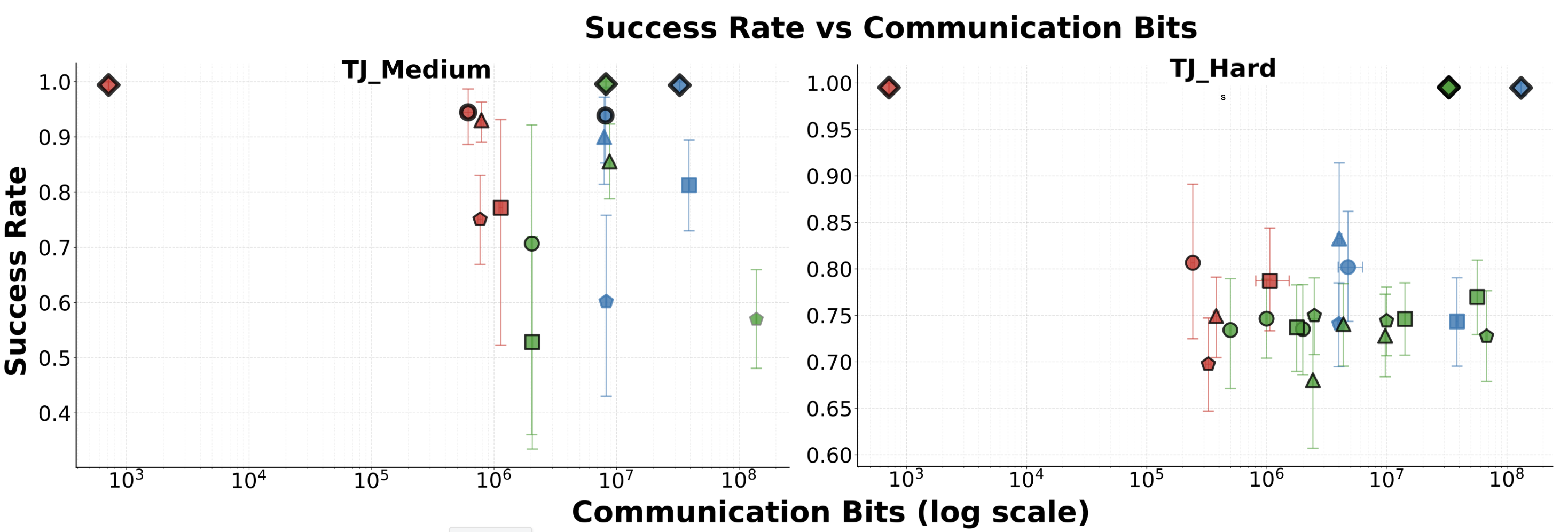}
        \caption{In \textbf{Traffic Junction}, results are nuanced. Variants like MAPPO achieve extreme lossless compression, while others trade a small amount of performance for significant efficiency gains, consistently outperforming naive STE quantization.}
        \label{fig:TJ_performance}
    \end{subfigure}

    % Google Research Football Subfigure
    \begin{subfigure}[b]{\textwidth}
        \centering
        \includegraphics[width=\textwidth]{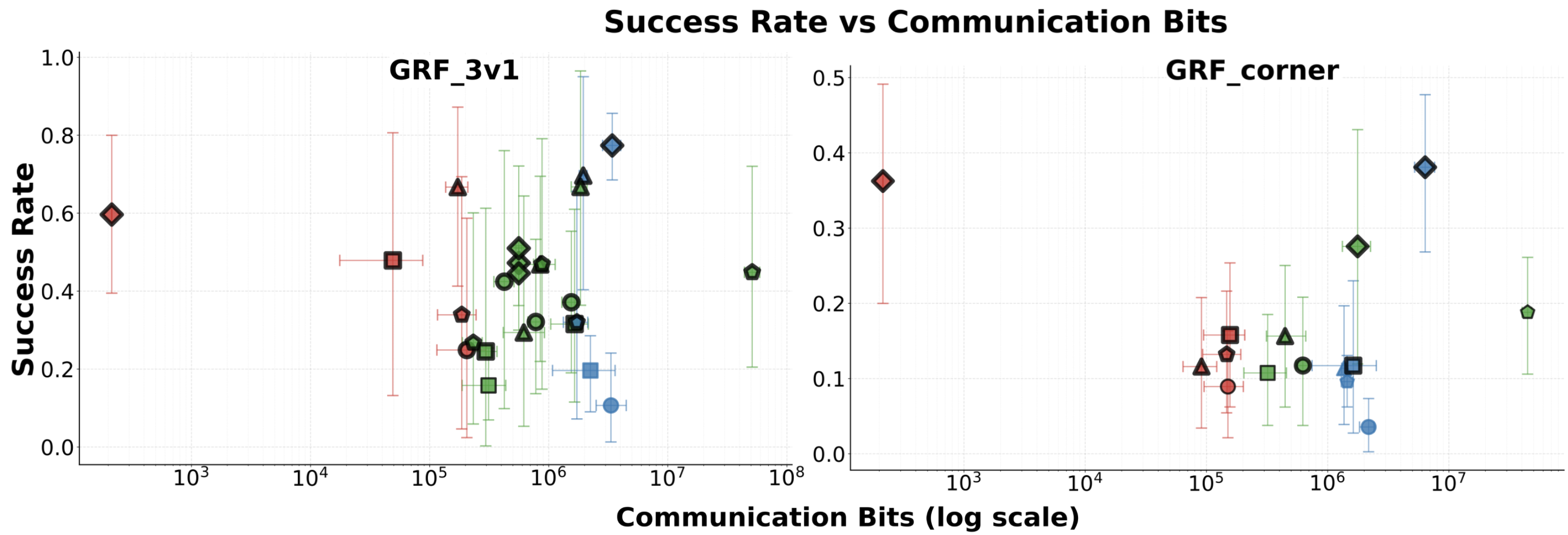}
        \caption{In the complex \textbf{Google Research Football} environment, DDCL enables transformative gains, with architectures like IC3Net more than doubling their success rate, highlighting its ability to learn coordination in high-dimensional, sparse-reward tasks.}
        \label{fig:GRF_performance}
    \end{subfigure}

    \caption{
        \textbf{Performance versus communication bandwidth across all benchmark environments.}
        Each point represents an algorithm variant's mean performance (Success Rate) and communication cost (Bits per episode, log scale) over 5 different seeds. Error bars denote 95\% confidence intervals. Note that Y-axis scales are often focused on a specific range and may not start at zero.
        The top-left of each plot represents the ideal outcome (high success, low communication cost), while the bottom-right is the worst.
        Our DDCL-enhanced variants (red markers) consistently operate on the left side of the plots, demonstrating significant communication savings. The global Pareto frontier, representing the best possible trade-offs, is marked with a \textbf{thick black border}, while algorithm-specific frontiers are marked with a \textbf{thin black border}. `STE\_X` refers to a baseline using a Straight-Through Estimator to quantize 32-bit float messages to `X` bits.
    }
    \label{fig:main_results_combined}
\end{figure*}

\subsection{General Utility on MARL Benchmarks}
\label{sec:main_benchmarks}

Next, we demonstrate that our generalized DDCL is a robust, plug-and-play module that improves communication efficiency across a wide range of complex MARL algorithms and challenging environments.

\textbf{MARL Environments.} We evaluate our approach on three diverse and standard benchmarks. First, we use \textbf{Traffic Junction (TJ)} \citep{IC3Net}, a cooperative navigation task where agents must avoid collisions in Medium (10 agents) and Hard (20 agents) settings. Second, we use \textbf{Predator-Prey (PP)} \citep{IC3Net}, a cooperative pursuit task where predators coordinate to capture prey in Medium (5 predators) and Hard (10 predators) settings. Finally, we test on the high-dimensional, physics-based \textbf{Google Research Football (GRF)} environment \citep{IC3Net}, evaluating on a 3-vs-1 attacking scenario and a corner kick setup with sparse rewards. For all tasks, performance is measured by the episode success rate. Details about the environment in \cref{app:environment_details}

\textbf{Baseline Algorithms and Evaluation Protocol.} We integrate DDCL into four state-of-the-art MARL+Comms algorithms that feature different, specialized communication architectures: \textbf{IC3Net} \citep{IC3Net}, which uses a RNN policy and a learned binary gate; \textbf{TarMAC} \citep{TarMAC}, which employs soft attention; \textbf{GAComm} \citep{GAComm}, which uses graph attention; and \textbf{MAGIC} \citep{MAGIC}, which separates message scheduling and processing with transformer encoder \citep{Transformer}. For each baseline, we compare the original implementation (using 32-bit float messages), several versions using fixed quantization (detailed in \cref{app:quantization}) with a Straight-Through Estimator (STE), and our DDCL version.

\textbf{DDCL as a Universal Efficiency Multiplier.}
Our results consistently show that DDCL serves as a universal efficiency multiplier, dramatically improving the performance-bandwidth trade-off, consistently reducing communication by one to five orders of magnitude across all tested architectures and environments. The Pareto plots in \cref{fig:main_results_combined} illustrate this clearly: most DDCL-enhanced variants (red markers) consistently form the global Pareto frontier (indicated by thick black borders), representing the best achievable performance for a given communication budget. Crucially, this efficiency gain frequently enables superior performance. The most transformative result is seen with IC3Net in the complex GRF 3v1 environment, where DDCL achieves a statistically significant 467.1\% increase in success rate, learning a more effective coordination protocol than its high-precision counterpart. While the high stochasticity of the most complex tasks leads to wide confidence intervals for many runs, the mean performance shows substantial improvements across the board, such as for TarMAC (+49.9\% gain) and MAGIC (+176.3\% gain) in different settings. The few instances where a performance trade-off occurs highlight DDCL's ability to navigate the efficiency-performance frontier, sacrificing a small amount of performance for immense ($>$7000x) communication savings (MAPPO).

DDCL's adaptive precision proves to be a fundamentally more robust and powerful strategy than naive, fixed-quantization (STE). This is most evident in challenging environments like Predator-Prey Hard, where DDCL enables the MAGIC and GAComm agents to achieve significant success rate gains of over 155\% and 75\% respectively, compared to their STE counterparts. This highlights that in complex coordination problems, simply using fewer bits is not enough; learning how to communicate precisely is critical. While DDCL is the superior approach in the vast majority of cases, we identify a few specific task-architecture pairings where a simple STE baseline proves competitive (e.g., TarMAC in TJ\_Hard). These rare exceptions are valuable findings that reveal complex local optima in the learning landscape. However, the overwhelming trend confirms that DDCL's learned, adaptive approach provides a more reliable path to high-performing, communication-efficient agents. For a detailed breakdown of these results, please see \cref{app:ddcl_vs_baselines}.

\subsection{Validating the \enquote{Bitter Lesson}}
\label{sec:bitter_lesson}
Our final and broadest contribution is to test the hypothesis that the \enquote{Bitter Lesson} applies to MARL communication: general-purpose methods that leverage computation are ultimately superior to those relying on complex, human-designed priors.

\textbf{Experimental Design.}
We test this hypothesis by comparing a simple, general architecture against the complex, specialized baselines from the previous section. Our model, dubbed \textbf{MAPPO\_Transformer} (we drop Transformer for all variants from the naming in our figures for brevity), uses MAPPO algorithm with encoder-only Transformer networks for the policy. This architecture serves as a powerful, general-purpose graphical data processor \citep{joshi2025transformersgraphneuralnetworks} without any bespoke communication mechanisms. Communication is achieved simply by applying our DDCL framework to message vectors (keys and aggregated attention value vectors). We then train this simple \enquote{MAPPO\_Transformer\_DDCL} agent and compare its position on the performance-vs-bandwidth Pareto frontier against all the specialized baselines across all environments.

\textbf{Generality and Scale Outperform Bespoke Design.}
Our findings provide strong evidence for the \enquote{Bitter Lesson} in MARL communication. As illustrated across all benchmarks in \cref{fig:main_results_combined}, the simple MAPPO agent, when empowered by DDCL, achieves a performance-bandwidth trade-off that is highly competitive with, and often superior to, the more complex, specialized architectures. In nearly every environment, the MAPPO\_Transformer\_DDCL variant lies on the global Pareto frontier, demonstrating that it is among the best-performing models regardless of communication budget. For example, in Predator-Prey Hard, it achieves a higher success rate than all other models while using orders of magnitude less communication.

This result has significant implications. It suggests that the intricate, hand-crafted gating, scheduling, and graph-attention mechanisms of the specialized baselines may be less critical than the foundational ability to learn communication precision. By combining a general, scalable architecture (the Transformer) with a general, learnable communication regularizer (DDCL), we can match or exceed the performance of bespoke systems. This indicates that the core competency of efficient communication can be learned through general-purpose methods that leverage large-scale training, rather than being explicitly engineered through human-designed architectural priors.

We conducted an extensive sensitivity analysis on the two primary hyperparameters of our DDCL framework: the communication cost coefficient, $\lambda$, and the quantization granularity, $\delta$ in Appendix~\ref{app:hyperparameters}.

\section{Limitations and Future Work}

While our work demonstrates that DDCL is a powerful and general framework, we acknowledge several limitations that also chart exciting paths for future research. The current quantization grid is uniform and fixed; future work could learn a \textbf{per-channel granularity} ($\delta_k$) or explore \textbf{non-uniform grids} via learnable noise distributions (e.g., Gaussians). Furthermore, the current communication loss couples signal magnitude to bitrate; a more principled \textbf{entropy-based loss} could decouple this by learning a code based on signal frequency rather than magnitude. Finally, the practical requirement of \textbf{shared randomness} between agents is a key constraint; investigating the framework's robustness to desynchronized noise would improve its real-world applicability.

\section{Conclusion}

In this work, we presented a generalized DDCL framework, transforming it into a universal, plug-and-play module for learning efficient communication in any MARL architecture. By extending the core mechanism to support unbounded, signed signals, we removed architectural constraints that previously limited its applicability. Our experiments provide a threefold validation of our approach. Our qualitative analysis revealed that agents learn a near-optimal, frequency-aware communication protocol by assigning the shortest, lowest-cost messages to the most frequent events. We then demonstrated the framework's general utility by integrating it into several state-of-the-art MARL algorithms, showing it can reduce communication bandwidth by orders of magnitude while maintaining or improving task performance. Finally, we provided strong evidence for the \enquote{Bitter Lesson} in MARL communication by showing that a simple, general-purpose Transformer agent empowered by DDCL can match and often exceed the performance of complex, specialized architectures. Taken together, our findings challenge the prevailing trend of designing intricate, hand-crafted communication modules, suggesting that a more promising research direction lies in combining general, scalable architectures with principled, learnable mechanisms for efficient communication. The limitations of our current approach, which we detailed in the previous section, chart a clear path for future work in this exciting direction.

\section*{Reproducibility Statement}
To ensure the reproducibility of our work, we provide the full, open-source code for our experiments as supplementary material. Our implementation of the generalized DDCL module is available as a plug-and-play PyTorch layer in the submitted code. The experimental setups, including all environment details, baseline algorithm configurations, and training hyperparameters, are described in \cref{sec:approach}. A comprehensive breakdown of our results, with detailed tables and confidence intervals for all environments, is available in \cref{app:ddcl_vs_baselines}. The theoretical contributions of our work, including the proof of error independence for unbounded signals (\cref{app:proof_of_independence}) and the full derivation of our communication cost function (\cref{app:loss_derivation}), are provided with detailed, step-by-step derivations. Our work builds upon the publicly available codebases for MAGIC (\texttt{https://github.com/CORE-Robotics-Lab/MAGIC}), MAPPO (\texttt{https://github.com/zoeyuchao/mappo}) and GRF environment (\texttt{https://github.com/google-research/football}), which are cited in the main text. We believe these resources provide all necessary components to reproduce our findings.

\bibliography{iclr2026_conference}
\bibliographystyle{iclr2026_conference}

\appendix

\section{Proof of statistical independence between the quantization error \(e\) and the signal \(z\)}
\label{app:proof_of_independence}

In this section, we prove the critical property that the reconstruction error $e$ is statistically independent of the original signal $z$. We establish this result through two distinct but complementary approaches. The first proof provides a geometric interpretation \cref{thm:error_independence_via_geometric_interpretation} to build intuition, while the second offers a more formal, analytical derivation to rigorously confirm the finding \cref{thm:error_independence_via_formal_proof}.

% \label{quantization_procedure}
% Let us first go over the \textbf{quantization procedure of the message vector $z$}. The process for sending a message from a sender to a receiver adapts the original DDCL mechanism through the following four steps, performed for each dimension $k$ of the signal vector:
% \begin{enumerate}
%     \item The sender's policy outputs a real-valued signal $z[k] \in \mathbb{R}$.
%     \item A random noise scalar is drawn from a uniform distribution, $\epsilon[k] \sim U(-\delta/2, +\delta/2)$, and added to the signal to create a noisy version, $z'[k] = z[k] + \epsilon[k]$.
%     \item The noisy signal is quantized to a discrete integer message by rounding to the nearest quantization grid point, $m[k] = \lfloor \frac{z'[k]}{\delta} + \frac{1}{2} \rfloor$ wherein $m[k] \in \mathbb{Z}$.
%     \item The receiver, having access to the same random noise $\epsilon[k]$ via a synchronized pseudorandom number generator, reconstructs the signal by centering the quantization bin and subtracting the original noise, $\hat{z}[k] = m[k]\delta - \epsilon[k]$.
%     \item The approximation error is given by $e = \hat{z} - z$.
% \end{enumerate}

\subsection{The DDCL Quantization Procedure}
\label{sec:quantization_procedure}

We define the quantization procedure as follows. For each dimension of a signal vector $z \in \mathbb{R}^d$, a noisy signal $z' = z + \epsilon$ is created by adding uniform noise vector $\epsilon \sim U(-\delta/2, +\delta/2)$. This noisy signal is then quantized to an integer message 
\begin{equation}
    m = \left\lfloor \frac{z'}{\delta} + \frac{1}{2} \right\rfloor.
\end{equation}
The receiver reconstructs the signal as $\hat{z} = C(m) - \epsilon$, where $C(m) = (m+1/2)\delta$ is the center of the quantization bin identified by $m$, and $\epsilon$ is the same noise value shared via a synchronized pseudorandom number generator.

\begin{figure}[!htbp]
    \centering
    \includegraphics[width=\textwidth]{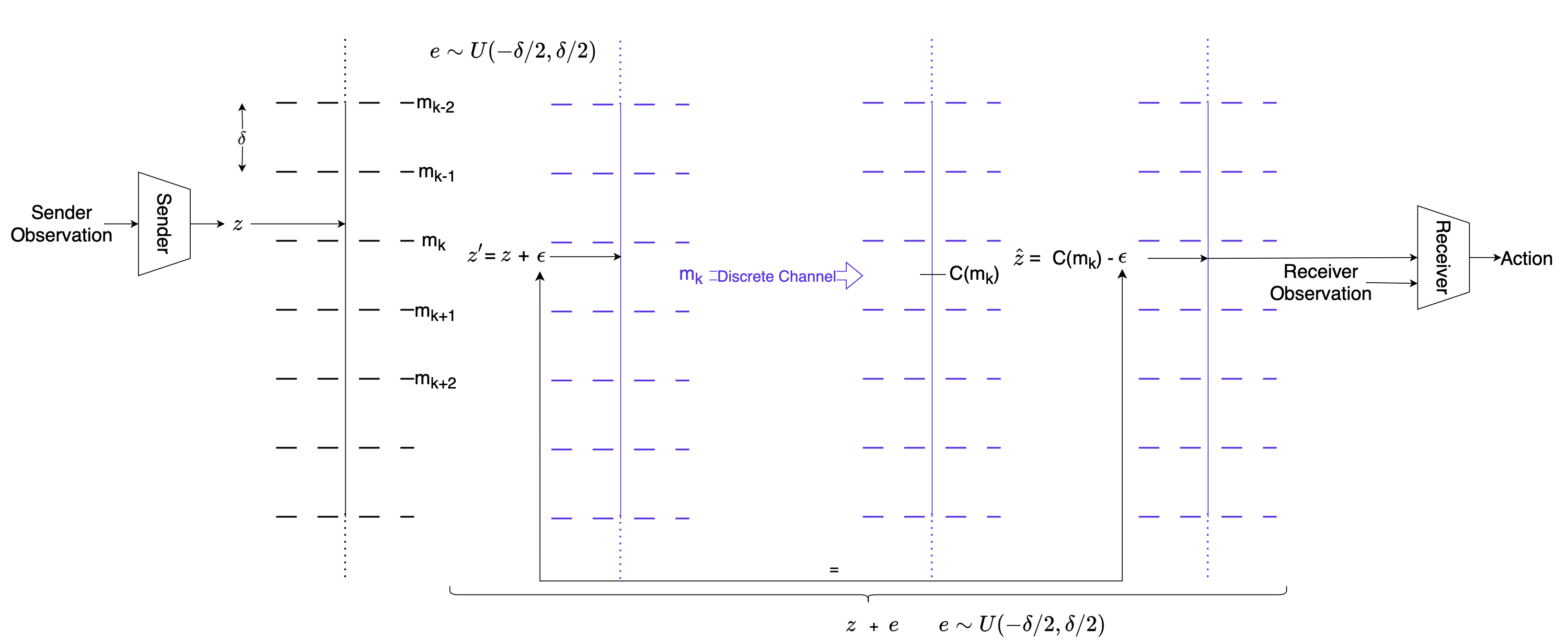}
    \caption{An overview of the generalized DDCL procedure. A sender's unbounded, real-valued signal $z$ is perturbed, quantized, and sent as a discrete message $m$. The receiver uses shared randomness to reconstruct the signal $\hat{z}$ in a way that allows gradients to flow back to the sender.}
    \label{fig:DDCL_Proc}
\end{figure}

% \begin{theorem}[Statistical Independence of Quantization Error \(e\) and communication vector \(z\)]
% \label{thm:error_independence_via_geometric_interpretation}
% Let $z \in \mathbb{R}$ be a real-valued signal and let the noise $\epsilon$ be a random variable drawn from a uniform distribution, $\epsilon \sim U(-\delta/2, +\delta/2)$. Let the signal be quantized by first producing a noisy signal $z' = z + \epsilon$, which is then used to generate a reconstructed signal $\hat{z} = (\lfloor z'/\delta \rfloor + 1/2)\delta - \epsilon$. Let the reconstruction error be defined as $e = \hat{z} - z$. \ref{quantization_procedure}

% Then, the reconstruction error $e$ is statistically independent of the signal $z$ and is uniformly distributed, $e \sim U(-\delta/2, +\delta/2)$.
% \end{theorem}
\begin{theorem}[Statistical Independence of Reconstruction Error]
\label{thm:error_independence_via_geometric_interpretation}
For any real-valued signal $z \in \mathbb{R}$, let the reconstructed signal $\hat{z}$ be generated by the quantization procedure described in Section~\ref{sec:quantization_procedure}. The resulting reconstruction error, defined as $e = \hat{z} - z$, is statistically independent of the original signal $z$. Furthermore, the error follows a uniform distribution, $e \sim U(-\delta/2, +\delta/2)$.
\end{theorem}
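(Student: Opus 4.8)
The plan is to compute the reconstruction error $e = \hat{z} - z$ in closed form and show that it reduces to an affine function of a single scalar — the offset of the dithered signal within its quantization bin — whose law turns out to be uniform and, crucially, identical for every value of $z$. The independence and the uniform marginal then both fall out of this single fact.

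First I would substitute the reconstruction rule $\hat{z} = C(m) - \epsilon$ together with $z' = z + \epsilon$ and $m = \lfloor z'/\delta + 1/2 \rfloor$ directly into $e = \hat{z} - z$. The shared-noise term $\epsilon$ supplied by the synchronized generator cancels against the $\epsilon$ hidden inside $z'$, so $e$ collapses to (minus) the quantization residual of the dithered signal: writing $g(z') = z' - \delta\lfloor z'/\delta + 1/2\rfloor$ for the nearest-multiple residual, one finds that $e$ equals $-g(z')$ up to an additive constant fixed by the bin-center convention. The structural point is that $e$ depends on the pair $(z,\epsilon)$ only through the periodic sawtooth $g(z')$, i.e. only through $(z+\epsilon)/\delta \bmod 1$.

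Next comes the geometric heart of the argument. Fix $z$ and treat the randomness as coming solely from $\epsilon$. As $\epsilon$ ranges over its support $(-\delta/2, +\delta/2)$, the dithered point $z' = z + \epsilon$ sweeps an interval of length exactly $\delta$, that is, exactly one quantization period. Projecting this interval onto the torus $\mathbb{R}/\delta\mathbb{Z}$ — equivalently, feeding it through the period-$\delta$ sawtooth $g$ — covers every residue exactly once. Because a uniform law on an interval whose length equals the period, reduced modulo that period, is uniform on the full period \emph{regardless of where the interval begins}, the residue $g(z')$ is distributed as $U(-\delta/2, \delta/2)$ and this conditional law carries no dependence on $z$. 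Since $e$ is a fixed affine image of $g(z')$, the conditional distribution of $e$ given $z$ is one and the same uniform law for every $z$; constancy of the conditional law in $z$ is exactly statistical independence, and the common marginal is the claimed $U(-\delta/2, +\delta/2)$.

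The main obstacle is making the wrapping step airtight rather than merely pictorial. The floor function is discontinuous at bin boundaries, so for a generic $z$ the swept length-$\delta$ interval of $z'$ values straddles such a boundary; one must verify that the two pieces lying on either side of the discontinuity reassemble, after reduction modulo $\delta$, into a single full period covered with uniform density, with neither gap nor overlap. Handling this case split carefully is precisely what upgrades the conclusion from mere uncorrelatedness to genuine independence, since it is the exact $z$-invariance of the entire conditional density — not just of its mean or variance — that has to be certified.
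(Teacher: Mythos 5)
Your argument is correct and is essentially the paper's own geometric proof: both reduce $e$ to the quantization residual of the dithered signal $z'$ and then exploit the fact that the length-$\delta$ support of $z'$ given $z$ covers exactly one quantization period --- you by pushing the uniform law of $z'$ forward through the period-$\delta$ sawtooth (reduction mod $\delta$), the paper by intersecting the $\delta$-spaced preimage set $S_y$ with that interval and finding a unique point of density $1/\delta$ (its second, ``formal'' proof makes your two-piece reassembly explicit via the two admissible messages $m_a,m_b$). The only loose end is the additive constant you flag: under the procedure as literally stated ($m=\lfloor z'/\delta+1/2\rfloor$ with $C(m)=(m+1/2)\delta$) the error lands in $(0,\delta]$ rather than $(-\delta/2,\delta/2]$ --- an inconsistency inherited from the paper itself, whose proof silently switches to $m=\lfloor z'/\delta\rfloor$ --- and it affects only the centering of the uniform law, not the independence claim.
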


\begin{proof}[Proof of Theorem~\ref{thm:error_independence_via_geometric_interpretation}: Statistical Independence of Reconstruction Error]
The objective is to prove that for any real-valued signal $z \in \mathbb{R}$, the reconstruction error $e = \hat{z} - z$ is statistically independent of $z$ and follows a uniform distribution, $e \sim U(-\delta/2, +\delta/2)$. 
% The proof proceeds in four steps: (1) we formally define the procedure; (2) we derive a simplified expression for the error term; (3) we prove that a unique condition for each error value is met with probability 1; and (4) we use this to show that the error distribution is independent of the signal.

% \paragraph{Step 1: Formal Definitions.}
We first define the components of the communication procedure as established in Section~\ref{sec:quantization_procedure}.
\begin{enumerate}
    \item A sender's signal $z \in \mathbb{R}$ is perturbed by noise $\epsilon \sim U(-\delta/2, +\delta/2)$ to produce a noisy signal $z' = z + \epsilon$.
    \item The noisy signal is quantized via truncation to an integer message $m = \lfloor z'/\delta \rfloor$.
    \item The receiver reconstructs the signal as $\hat{z} = C(m) - \epsilon$, where $C(m) = (m + 1/2)\delta$ is the quantization point representing the center of the bin identified by $m$.
\end{enumerate}

% \paragraph{Step 2: Derivation of the Error Term.}
We now derive an expression for the reconstruction error $e = \hat{z} - z$ to show that it depends only on the noisy signal $z'$, not the original signal $z$.
By substituting the definition of the reconstructed signal $\hat{z}$ from Step 1, we have:
\begin{equation}
    e = (C(m) - \epsilon) - z.
\end{equation}
Next, we substitute the relation $z = z' - \epsilon$:
\begin{align*}
    e &= (C(m) - \epsilon) - (z' - \epsilon) \\
    e &= C(m) - \epsilon - z' + \epsilon \\
    e &= C(m) - z'.
\end{align*}
Finally, substituting the definitions for $C(m)$ and $m$ gives the error solely in terms of $z'$:
\begin{equation}
    \label{eq:error_term_final}
    e = \left(\left\lfloor \frac{z'}{\delta} \right\rfloor + \frac{1}{2}\right)\delta - z'.
\end{equation}
This result is critical, as it demonstrates that the reconstruction error $e$ is identical to the quantization error of the noisy signal $z'$.

% \paragraph{Step 3: Probabilistic Analysis and Uniqueness of the Error Condition.}
Our goal is to find the conditional probability density function (PDF) of the error, $p(e=y \mid z)$. From \cref{eq:error_term_final}, we know that a specific error $e=y$ occurs if and only if the random variable $z'$ takes on a value that satisfies the equation. Let $k = \lfloor z'/\delta \rfloor$. The equation becomes $y = (k + 1/2)\delta - z'$, which can be rearranged to $z' = (k + 1/2)\delta - y$. Since $k$ can be any integer, the error $e=y$ occurs if $z'$ is a member of the following infinite, discrete set of points:
\begin{equation}
    S_y = \left\{ \left(k + \frac{1}{2}\right)\delta - y \;\middle|\; k \in \mathbb{Z} \right\}.
\end{equation}
The probability of the event $e=y$ is the sum of the probabilities of $z'$ taking on any value in this set. A term in this sum is non-zero only if a point from $S_y$ falls within the support of the distribution of $z'$. The support of $z' = z + \epsilon$ is the closed interval $[z - \delta/2, z + \delta/2]$, which has a length of exactly $\delta$. The points in the set $S_y$ are also spaced exactly $\delta$ apart.

An interval of length $\delta$ can contain at most two points from a set with spacing $\delta$. This edge case occurs only if the interval's endpoints coincide with points in $S_y$. However, because $z'$ is a \textbf{continuous} random variable, the probability of its support boundary taking on any single, specific value is zero. Formally, for any point $s \in S_y$:
\begin{equation}
    P\left(z - \frac{\delta}{2} = s\right) = 0.
\end{equation}
As this boundary collision event occurs with probability zero, we can conclude that the support interval $[z - \delta/2, z + \delta/2]$ contains exactly one point from the set $S_y$. This guarantees the existence of a unique integer, denoted $k^*$, for which the probability term is non-zero.

% \paragraph{Step 4: Conclusion.}
Since a unique integer solution $k^*$ is guaranteed with probability 1, the infinite sum for the conditional PDF collapses to a single term:
\begin{equation}
    p(e=y \mid z) = p\left(z' = \left(k^* + \frac{1}{2}\right)\delta - y \mid z\right).
\end{equation}
The value of the PDF for the uniform distribution of $z'$ is constant ($1/\delta$) at any point within its support. Therefore:
\begin{equation}
    p(e=y \mid z) = \frac{1}{\delta}.
\end{equation}
The resulting conditional density is a constant that does not depend on the original signal $z$. Thus, the reconstruction error $e$ is statistically independent of $z$ and follows the uniform distribution $U(-\delta/2, +\delta/2)$.
\end{proof}

\begin{theorem} [Statistical Independence Between \(e\) and \(z\).] 
\label{thm:error_independence_via_formal_proof}
Assume the stochastic quantization strategy described in Sec. \ref{sec:quantization_procedure}: let \(z \in \mathbb{R}\), and \(\epsilon \sim U(\frac{-\delta}{2}, \frac{+\delta}{2})\). Messages are quantized into discrete message \(m\) based on which uniformly spaced quantization intervals \(z' = z+ \epsilon\) falls into, \textit{i.e.}, message \(m\) is chosen iff \(L(m) \le z+ \epsilon < U(m)\), where \(U(m)\) and \(L(m)\) are the upper and lower bounds of the quantization interval corresponding to \(m\), respectively, and \(U(m) - L(m) = \delta\).  The \enquote{reconstructed} \(z\) is given by \(\hat{z} = C(m) - \epsilon\), where \(C(m)\) is the center of the quantization interval corresponding to \(m\). Let the reconstruction error be defined as \(e = \hat{z} - z\). Then we have that \(e \sim U(-\frac{\delta}{2},\frac{\delta}{2})\), and \(e \perp \!\!\! \perp z\). 
\end{theorem}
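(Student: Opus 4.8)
The plan is to reduce the claim to a single structural fact about a sawtooth map and then exploit periodicity. First I would reproduce the algebraic cancellation already used for Theorem~\ref{thm:error_independence_via_geometric_interpretation}: substituting $\hat{z} = C(m) - \epsilon$ and $z = z' - \epsilon$ into $e = \hat{z} - z$ gives $e = C(m) - z'$, so the noise term cancels and the error depends on the pair $(z, \epsilon)$ only through the noisy signal $z' = z + \epsilon$. Since $m$ is selected solely by which interval contains $z'$ (the condition $L(m) \le z' < U(m)$), the error is a deterministic function $e = g(z') := C(m(z')) - z'$ of $z'$ alone.

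Next I would establish the two properties of $g$ that drive everything. Because the quantization intervals are uniformly spaced with common width $U(m) - L(m) = \delta$, translating $z'$ by $\delta$ advances it to the next interval and shifts the corresponding center $C(m)$ by the same amount; hence $g(z' + \delta) = g(z')$, i.e. $g$ is periodic with period $\delta$. Moreover, on any single interval $[L(m), U(m))$ the map $g$ is affine with slope $-1$, sending $z' = L(m)$ to $+\delta/2$ and $z' \to U(m)$ to $-\delta/2$; thus $g$ restricted to one period is a (measure-theoretic) bijection onto $(-\delta/2, \delta/2]$ with unit-magnitude Jacobian.

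With these in hand I would compute the conditional law of $e$ given $z$. Conditioned on $z$, the variable $z' = z + \epsilon$ is uniform on $[z - \delta/2,\, z + \delta/2]$, an interval whose length is exactly one period $\delta$. For any measurable set $A$,
\begin{equation*}
  P(e \in A \mid z) = \frac{1}{\delta}\int_{z-\delta/2}^{z+\delta/2} \mathbf{1}[g(u) \in A]\,du = \frac{1}{\delta}\int_{0}^{\delta} \mathbf{1}[g(u) \in A]\,du,
\end{equation*}
where the second equality is the standard fact that the integral of a $\delta$-periodic function over any window of length $\delta$ is independent of the window's location. The right-hand side does not involve $z$, which proves $e \perp\!\!\!\perp z$; evaluating it through the slope-$-1$ bijection of the previous step yields constant density $1/\delta$ on $(-\delta/2, \delta/2]$, i.e. $e \sim U(-\delta/2, \delta/2)$.

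The hard part is not the algebra but making the \enquote{folding} rigorous: the support $[z - \delta/2, z + \delta/2]$ of $z'$ generally straddles two adjacent quantization intervals, so $g$ is only piecewise affine on it, and one must verify that the two partial pieces reassemble into exactly one full traversal of $(-\delta/2, \delta/2]$. The periodicity identity above is precisely what handles this cleanly, collapsing the straddling case to integration over a single canonical period. The only remaining technicality is that the interval endpoints, where $z'$ lands exactly on a bin boundary, form a measure-zero set and can be ignored because $z'$ is a continuous random variable, so the choice of half-open versus closed intervals does not affect the distribution.
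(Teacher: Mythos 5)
Your proof is correct, but it follows a genuinely different route from the paper's own proof of Theorem~\ref{thm:error_independence_via_formal_proof}. The paper's argument is an explicit probabilistic marginalization: it writes $P(e\mid z)$ as a sum over messages and an integral over $\epsilon$, observes that for a fixed $z$ only two messages $m_a$ and $m_b$ can occur, restricts each integral to the $\epsilon$-region where the corresponding message is selected, and then checks by hand that the two resulting sub-intervals $[-\delta/2,\,U(m_a)-z)$ and $[L(m_b)-z,\,\delta/2)$ abut (since $U(m_a)=L(m_b)$) and tile the full interval $[-\delta/2,\delta/2]$. You instead reduce everything to the deterministic sawtooth map $g(z')=C(m(z'))-z'$, prove it is $\delta$-periodic and a slope-$(-1)$ bijection of each bin onto $(-\delta/2,\delta/2]$, and then invoke translation invariance of the integral of a $\delta$-periodic function over any window of length $\delta$; the bin-straddling bookkeeping that the paper handles by explicit case analysis is absorbed in one line by periodicity. (Your reduction $e=C(m)-z'$ is the same starting point as the paper's \emph{first} proof, Theorem~\ref{thm:error_independence_via_geometric_interpretation}, but that proof then counts preimage points of a fixed error value in the support of $z'$, which is again different from your pushforward computation.) What your version buys is concision and a transparent explanation of \emph{why} the theorem holds --- the noise support has length exactly one period of $g$, a hypothesis you correctly flag as essential; what the paper's version buys is a fully elementary derivation in terms of $P(m\mid z,\epsilon)$ that makes the two admissible messages and their selection regions explicit, which is useful for readers tracking the operational semantics of the channel. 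Your handling of the measure-zero endpoint issue is also adequate.
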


\begin{proof}
We start by expressing \(P(e|z)\) in terms of the full joint distribution over \(e\), \(m\), and \(\epsilon\) given \(z\), marginalized over \(m\) and \(\epsilon\).  

\begin{align}
    P(e|z) &= \sum_{m} \int_{-\infty}^{+\infty} P(e|z,m,\epsilon) P(m|z,\epsilon) P(\epsilon) d\epsilon, \\
    &= \frac{1}{\delta}\sum_{m} \int_{-\frac{\delta}{2}}^{+\frac{\delta}{2}} P(e|z,m,\epsilon) P(m|z,\epsilon) d \epsilon.
\end{align}

We note that for a given \(z\) there are only two possible values of \(m\), which we denote \(m_a\) and \(m_b\), where \(m_a\) is the lower of the two possible intervals, such that \(L(m_a) \le z - \frac{\delta}{2} <U(m_b)\), and \(m_b\) is the higher of the two, such that \(L(m_b) \le z + \frac{\delta}{2} <U(m_b)\).  Eliminating all other terms from the above summation, we have

\begin{equation}
    P(e|z) =  \frac{1}{\delta} \left(\int_{-\frac{\delta}{2}}^{+\frac{\delta}{2}}P(e|z,m_a,\epsilon) P(m_a|z,\epsilon) d \epsilon + \int_{-\frac{\delta}{2}}^{+\frac{\delta}{2}}P(e|z,m_b,\epsilon) P(m_b|z,\epsilon) d \epsilon \right).
\end{equation}

By our definition of \(m_a\) and \(m_b\), we have that 

\begin{equation}
    P(m_a|z,\epsilon) = 
    \begin{cases}
        1 \quad \text{if} \quad L(m_a) - z \le \epsilon <U(m_a) \\
        0 \quad \text{otherwise},
    \end{cases}
\end{equation}

and 

\begin{equation}
    P(m_b|z,\epsilon) = 
    \begin{cases}
        1 \quad \text{if} \quad L(m_b) - z \le \epsilon <U(m_b) \\
        0 \quad \text{otherwise}.
    \end{cases}
\end{equation}

We can therefore further restrict the bounds of the integral over \(\epsilon\) to include only regions where \(P(m_a|z,\epsilon)\) and \(P(m_b|z,\epsilon)\) are nonzero, respectively:

\begin{equation}
    P(e|z) =  \frac{1}{\delta} \left(\int_{-\frac{\delta}{2}}^{U(m_a) - z}P(e|z,m_a,\epsilon) P(m_a|z,\epsilon) d \epsilon + \int_{L(m_b)-z}^{+\frac{\delta}{2}}P(e|z,m_b,\epsilon) P(m_b|z,\epsilon) d \epsilon \right).
\end{equation}

Because the error is deterministic given \(m\), \(\epsilon\), and \(z\), and is given by \(e =C(m) - \epsilon - z\), we have that \(P(e|z,m_a,\epsilon) = \mathbf{\delta}(C(m) - \epsilon - z)\), where \(\mathbf{\delta}\) is a Dirac delta function.  Evaluating the above integral, we have

\begin{align}
    P(e|z) &= \frac{1}{\delta} \left( \begin{cases}
        1 \quad \text{if} \quad \epsilon \in [\frac{-\delta}{2},U(m_a)-z) \\
        0  \quad \text{otherwise} 
    \end{cases} + \begin{cases} 1 \quad \text{if} \quad \epsilon \in [L(m_b),\frac{\delta}{2}) \\ 0 \quad \text{otherwise}  
    \end{cases}
    \right) \\
    &= \frac{1}{\delta} \left( \begin{cases}
        1 \quad \text{if} \quad \epsilon \in [\frac{-\delta}{2},U(m_a)-z) \\
        0  \quad \text{otherwise} 
    \end{cases} + \begin{cases} 1 \quad \text{if} \quad \epsilon \in [U(m_a),\frac{\delta}{2}) \\ 0 \quad \text{otherwise}  
    \end{cases}
    \right)  \quad \\ &\text{(Because \(U(m_a)=L(m_b)\))} \\
    &= \frac{1}{\delta } \left( \begin{cases}
        1 \quad \text{if} \quad \epsilon \in [\frac{-\delta}{2},\frac{\delta}{2} ] \\
        0  \quad \text{otherwise} 
    \end{cases} \right) \\
    &= U(-\frac{\delta}{2},\frac{\delta}{2}).
\end{align}

Thus we have that \(e\) is uniformly distributed and independent of \(z\).

\end{proof}

\section{Derivation of the Differentiable Communication Cost}
\label{app:loss_derivation}
In this section, we provide a detailed derivation of the differentiable communication cost, $\mathcal{L}_{comms}$. The objective is to find a differentiable function that serves as a tight upper bound for the expected bit length of a message, $\mathbb{E}[\text{length}(m_b) \mid z]$, given the continuous, unbounded signal $z \in \mathbb{R}$. The derivation proceeds in three main steps.

\begin{proof}
We first establish an information-theoretic upper bound on the number of bits required to transmit a signed integer message $m \in \mathbb{Z}$. A common and efficient method to encode a signed integer is to first map it to a unique non-negative integer. A standard mapping is $f(m) = 2|m|$ if $m \ge 0$ and $f(m) = 2|m| - 1$ if $m < 0$. The largest value this mapping can produce for a given magnitude $|m|$ is $2|m|$. The number of bits required to encode a non-negative integer $k$ is approximately $\log_2(k+1)$. Therefore, the bit length of our encoded message $m_b$ can be upper-bounded by the cost of encoding the largest possible value, $2|m|$:
\begin{equation}
    \label{eq:len_m_rigorous}
    \text{length}(m_b) \le \log_2(2|m| + 1).
\end{equation}
This bound gracefully handles $m=0$ and grows logarithmically with the magnitude of the message, effectively encoding both sign and magnitude information in a single expression.

Next, we establish the relationship between the expected magnitude of the discrete message, $\mathbb{E}[|m| \mid z]$, and the magnitude of the original continuous signal, $|z|$. The message $m$ is generated by rounding the noisy signal $z' = z+\epsilon$:
\begin{equation}
    m = \left\lfloor \frac{z+\epsilon}{\delta} + \frac{1}{2} \right\rfloor.
\end{equation}
Since the noise $\epsilon$ is drawn from a zero-mean distribution $U(-\delta/2, +\delta/2)$, the rounding operation is symmetric around $z/\delta$. Therefore, the expected value of the message is $\mathbb{E}[m \mid z] = z/\delta$. For the purpose of deriving a tractable surrogate loss, we assume that the expectation of the magnitude is approximately the magnitude of the expectation. This assumption is well-justified when $|z| \gg \delta$.
\begin{equation}
    \label{eq:expected_magnitude_rigorous}
    \mathbb{E}[|m| \mid z] = \frac{|z|}{\delta}.
\end{equation}
This linear relationship forms the basis of our differentiable surrogate.

Our goal is to find a differentiable upper bound on $\mathbb{E}[\text{length}(m_b) \mid z]$. We begin by taking the expectation of the inequality from \eqref{eq:len_m_rigorous}:
\begin{equation}
    \mathbb{E}[\text{length}(m_b) \mid z] \le \mathbb{E}[\log_2(2|m| + 1) \mid z].
\end{equation}
The function $f(x) = \log_2(x)$ is a concave function for $x>0$. For any concave function, Jensen's inequality states that $\mathbb{E}[f(X)] \le f(\mathbb{E}[X])$. Applying this inequality to our expression, we get:
\begin{align*}
    \mathbb{E}[\log_2(2|m| + 1) \mid z] &\le \log_2\left(\mathbb{E}[2|m| + 1 \mid z]\right) \\
    &= \log_2\left(2\mathbb{E}[|m| \mid z] + 1\right),
\end{align*}
where the second line follows from the linearity of expectation.

By chaining these results, we have an upper bound on the expected length in terms of the expected message magnitude.
\begin{equation}
    \mathbb{E}[\text{length}(m_b) \mid z] \le \log_2\left(2\mathbb{E}[|m| \mid z] + 1\right).
\end{equation}
Finally, we substitute our linear magnitude assumption from \eqref{eq:expected_magnitude_rigorous} to arrive at our differentiable upper bound for a single dimension of the signal:
\begin{equation}
    \mathbb{E}[\text{length}(m_b) \mid z] \le \log_2\left(\frac{2|z|}{\delta} + 1\right).
\end{equation}
This final expression is differentiable with respect to $|z|$ and serves as our surrogate communication cost. Summing this cost over all $d$ dimensions of the message vector, all communication edges $e \in E$, all $N$ agents, and all $T$ timesteps gives the final loss function:
\begin{equation}
    \mathcal{L}_{comms} = \sum_{i=1}^N \sum_{t=0}^T \sum_{e \in E} \sum_{k=1}^{d} \log_2 \left( \frac{2|z_{t}^e[k]|}{\delta} + 1 \right).
\end{equation}
\end{proof}

\section{Additional Details for Qualitative Analysis}
\label{app:qualitative_details}

\textbf{To supplement the analysis in Section~\ref{sec:qualitative_analysis}}, we provide additional visualizations of the learned communication protocol from the \texttt{CommunicatingGoalEnv} experiment. These figures offer a more detailed view of how the agent allocates communication bits across the entire state space and for the specific, non-uniform goal distribution.

Figure~\ref{fig:qualitative_heatmap} visualizes the learned communication policy by comparing the heatmap of bit costs against the underlying goal sampling distribution. The learned policy from our method (left) is spatially coherent; it correctly allocates the fewest bits to the most frequent goal at `(0,0)` and generalizes by assigning progressively more bits to locations as their distance from this high-frequency zone increases. The goal sampling frequency (right) shows the non-uniform probability distribution the agent was trained on. The strong visual correlation between the two heatmaps demonstrates that the agent has successfully learned the task's probability structure. This occurs because our framework's loss function couples the bit cost to the latent vector's magnitude, which encourages the network to learn a smooth and generalizable mapping from the coordinate space of the goals to the cost of communicating them.

Figure~\ref{fig:qualitative_barchart} provides an alternative view of the data presented in \cref{fig:toy_problem_analysis} in the main text. It shows the communication bits for the six specific goals sampled during training, sorted by their frequency category. This visualization makes it easy to compare the discrete bit levels assigned to each goal, clearly showing the inverse relationship between goal frequency and message length, from 0.25 bits for the 'High' frequency goal to over 15.95 bits for the 'Low' frequency goals.

\begin{figure}[h!]
    \centering
    \includegraphics[width=\textwidth]{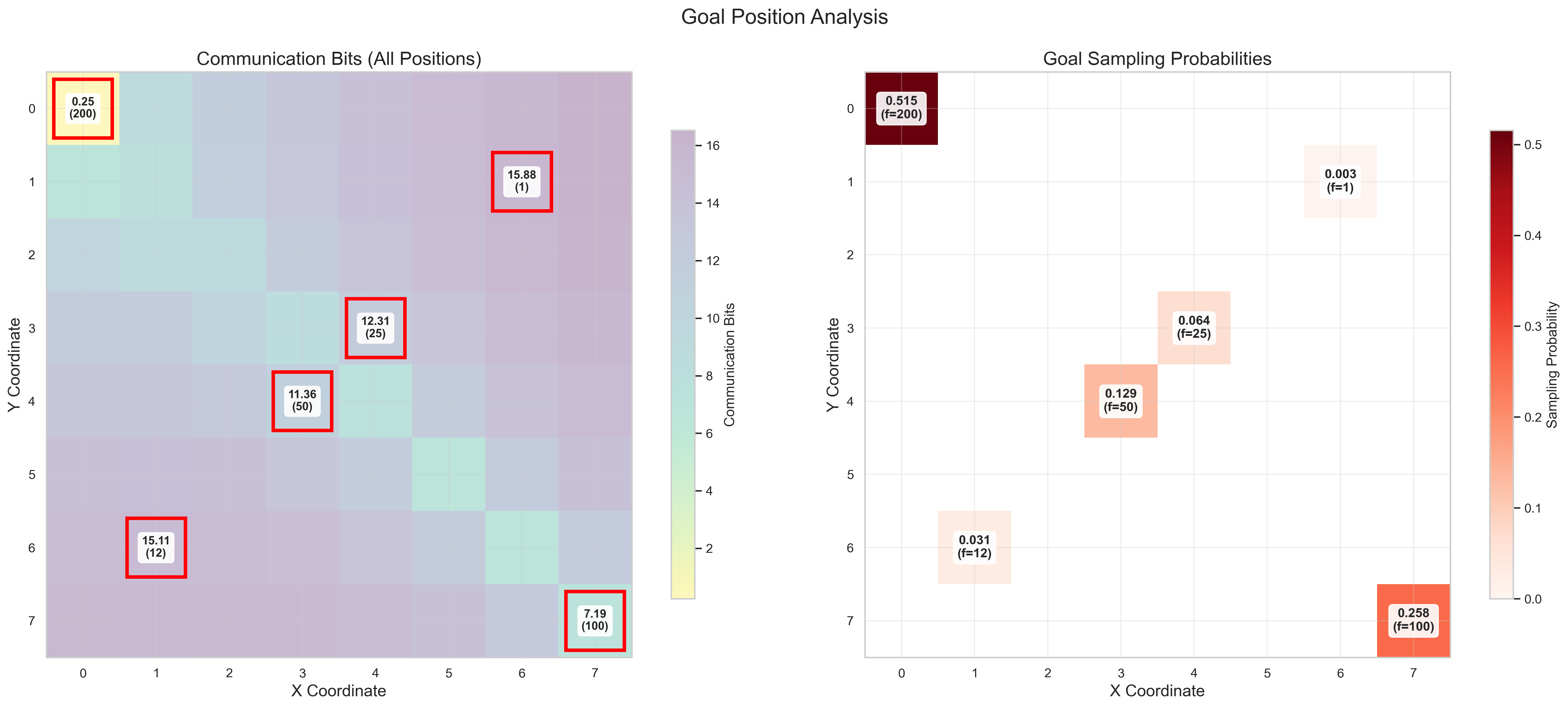}
    \caption{A comparison of the learned communication bit costs (left) with the goal sampling frequency (right) in the 8x8 grid. Our method learns a spatially smooth code that mirrors the underlying probability distribution, assigning the lowest cost to the most frequent goal at `(0,0)` and progressively higher costs to locations further away. Grid coordinates that were not sampled as goals are shown in white.}
    \label{fig:qualitative_heatmap}
\end{figure}

\begin{figure}[h!]
    \centering
    \includegraphics[width=0.7\textwidth]{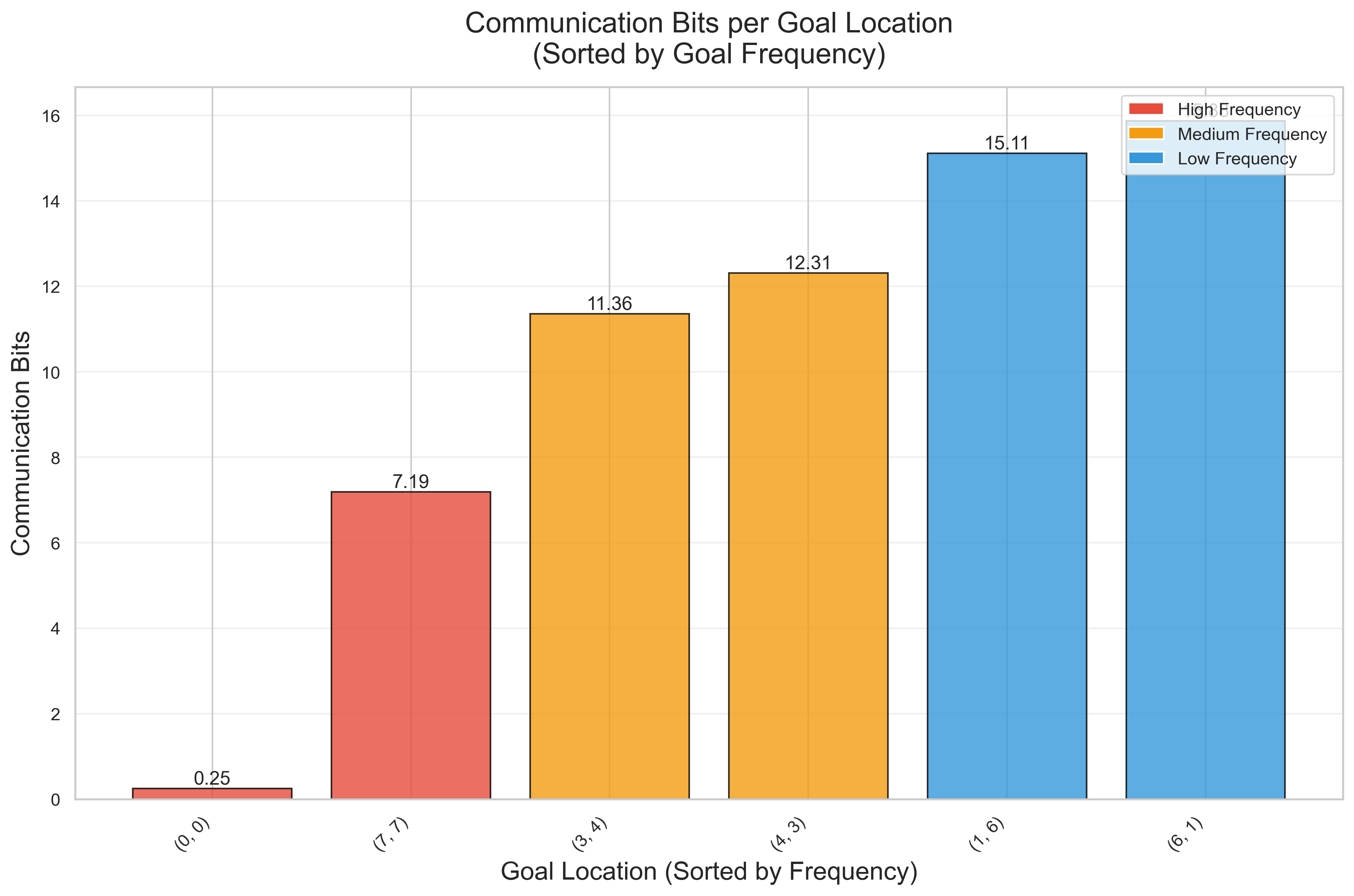}
    \caption{Communication bits per goal, sorted by the goal's frequency category. This chart clearly illustrates the learned inverse relationship: high-frequency goals are encoded with very few bits, while low-frequency goals require significantly more.}
    \label{fig:qualitative_barchart}
\end{figure}

\section{Detailed Analysis of the Communication Penalty via Shannon Gap}
\label{app:comms_penalty}

This section provides a detailed breakdown of the hyperparameter sweep over the communication penalty, $\lambda$, which generated the points for the Rate-Distortion curve in Section~\ref{sec:qualitative_analysis} of the main text. The results, presented in \cref{fig:shannon_gap}, illustrate how $\lambda$ directly controls the communication efficiency.

\begin{figure}[h!]
    \centering
    \includegraphics[width=\textwidth]{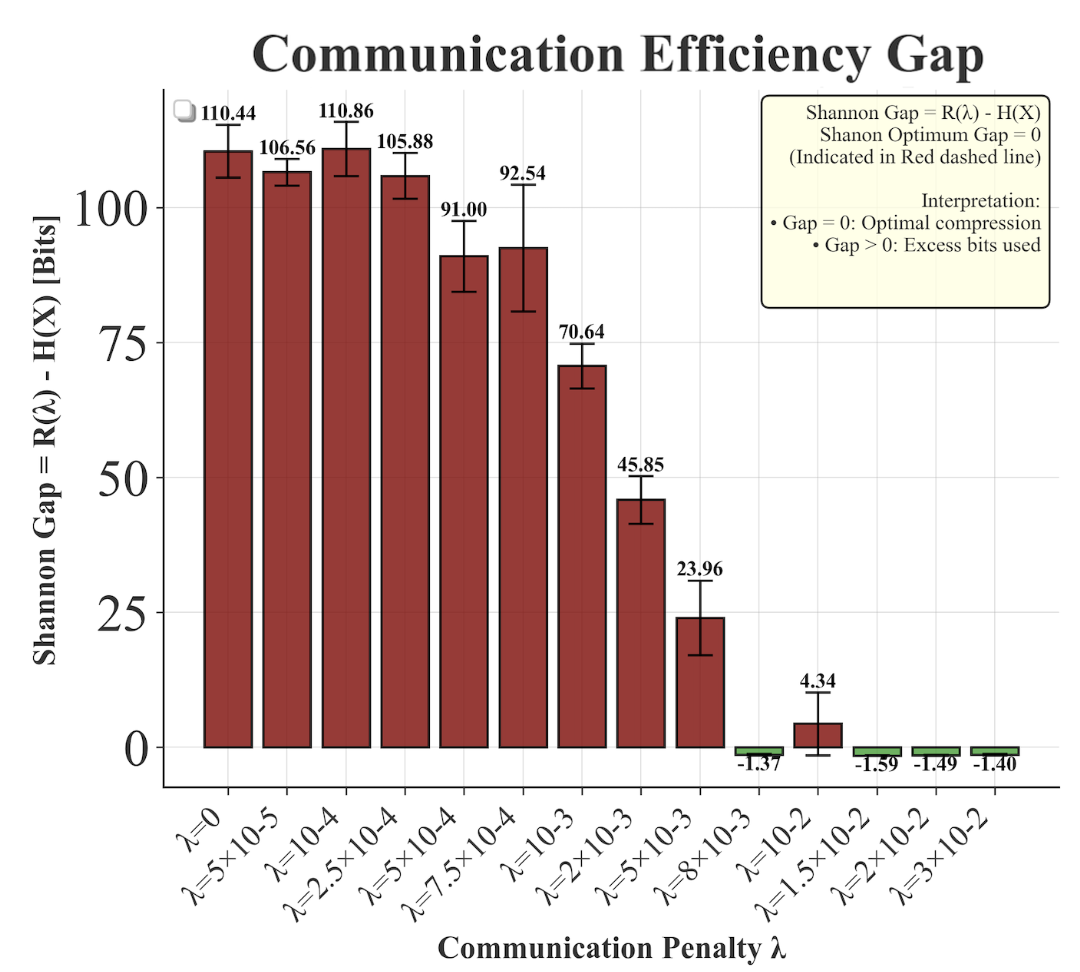}
    \caption{The Shannon Gap ($R(\lambda) - H(X)$) quantifies the communication inefficiency. The gap is large for small penalties but narrows significantly as the agent is forced to become more efficient.}
    \label{fig:shannon_gap}
\end{figure}

% \textbf{Controlling the Trade-off with $\lambda$.}
% The subplots in Figure~\ref{fig:lambda_sweep_details} reveal how the hyperparameter $\lambda$ controls the agent's behavior. As the penalty $\lambda$ increases, the average number of bits the agent uses per episode decreases monotonically, from over 140 bits to nearly zero (Figure~\ref{fig:lambda_sweep_details}b). This reduction directly impacts performance, as shown in Figure~\ref{fig:lambda_sweep_details}a. The success rate remains high (around 90\%) for low penalty values but drops significantly once the penalty surpasses a critical threshold (around $\lambda = 10^{-2}$). This confirms that $\lambda$ is an effective and predictable lever for controlling the agent's position on the Rate-Distortion curve.

\textbf{The Shannon Gap.}
Figure~\ref{fig:shannon_gap} quantifies the communication inefficiency by plotting the \enquote{Shannon Gap}---the difference between the empirical rate $R(\lambda)$ and the theoretical Shannon bound of 1.81 bits. For low values of $\lambda$, the agent uses a large number of excess bits (a gap of approx 110 bits) to guarantee high performance and robustness. As the penalty increases, the agent is forced to become more efficient, and the Shannon Gap narrows dramatically. For $\lambda \ge 8 \times 10^{-3}$, the agent's communication rate drops below the theoretical minimum required to solve the task perfectly, which is reflected in the corresponding performance drop \cref{fig:toy_problem_analysis}(a). This highlights the fundamental tension between pure communication efficiency and the robustness required for a downstream RL task.

\section{Detail description about the environment.}
\label{app:environment_details}

The \textbf{Predator-Prey (PP)} environment is a partially observable, cooperative multi-agent pursuit task set in a `dim` x `dim` grid world. At the start of each episode, a configurable number of \textit{predator} agents and many stationary \textit{prey} are placed at random, non-overlapping locations on the grid. The objective is for all predators to coordinate and simultaneously occupy the same grid cell as the prey to achieve a successful capture. The task is partially observable, as each predator's observation is limited to a local `vision` grid centered on itself. This observation is a one-hot encoded tensor that identifies the locations of other predators and the prey within its field of view. Each predator has a discrete action space of five actions: \textit{UP}, \textit{DOWN}, \textit{LEFT}, \textit{RIGHT}, and \textit{STAY}. The reward structure is designed to encourage coordinated and efficient captures: agents receive a small penalty at each timestep (`-0.05`), and a positive reward is given when a predator is on the prey's cell, with the reward magnitude increasing with the number of predators simultaneously on that cell. An episode is counted as a success if all predators capture the prey at the same time.

The \textbf{Traffic Junction (TJ)} environment is a cooperative, partially observable multi-agent grid world designed to simulate traffic coordination and collision avoidance. The environment is configured on a grid of a given dimension, with the complexity determined by the difficulty setting ('Medium' for 10 agents or 'Hard' for 20 agents), which defines the number of intersecting roads and possible paths. Agents, representing cars, are added stochastically into the system at entry points of the grid and are each assigned a predefined route to a destination on an opposite side of the junction. The task is partially observable, as each agent's observation is limited to a local `vision` grid around its current position. An agent's observation at each timestep is a tuple containing its previous action, its assigned route ID, and a one-hot encoded representation of its local view that identifies roads and other cars. The discrete action space for each agent consists of two actions: \textit{GAS} (move one step forward along the assigned path) or \textit{BRAKE} (remain in the current position). The reward function encourages both safety and efficiency; agents receive a small penalty at each timestep (`-0.01`) and a large penalty for collisions (`-10`), which occur if two or more agents occupy the same grid cell. An episode concludes when all active cars have reached their destinations, with the primary performance metric being the success rate, defined as completing an episode with zero collisions.

The \textbf{Google Research Football (GRF)} experiments utilize two distinct, challenging scenarios. The first, \textbf{GRF 3v1}, is a cooperative attacking drill where three player-controlled agents must coordinate their actions to score a goal against a single opponent goalkeeper. The second scenario, \textbf{academy\_corner}, is a set-piece task where the three agents must execute a corner kick to score. In both setups, each of the three agents uses the `simple115v2` observation space, which provides a 115-dimensional feature vector containing information about player and ball positions, velocities, possession, and game mode. To encourage goal-scoring behavior, both scenarios use the sparse `scoring` reward function, where the team receives a large positive reward for scoring a goal and a negative reward if the opponent scores.

We deliberately exclude benchmarks like the StarCraft Multi-Agent Challenge (SMAC \citep{samvelyan2019starcraftmultiagentchallenge} and SMACv2 \citep{ellis2023smacv2improvedbenchmarkcooperative}) from our evaluation. While they are standard in MARL, their design often makes it difficult to isolate the specific contribution of a learned communication protocol. As demonstrated in \citep{dewitt2020independentlearningneedstarcraft}, many scenarios in these environments can be solved effectively by non-communicating independent learners (e.g., IPPO) that coordinate implicitly by observing the shared environment state. This principle largely extends to SMACv2; because agents can infer team strategy and enemy positions from their local observations, the necessity for explicit, learned communication is often diminished. To properly evaluate a communication framework, it is essential to use environments where informational asymmetries create a clear and undeniable bottleneck, making successful coordination impossible or highly suboptimal without explicit information sharing. Our chosen environments—Traffic Junction, Predator-Prey, and Google Research Football—are all designed with such bottlenecks at their core.

Furthermore, we intentionally employ sparse reward functions across our chosen benchmarks, where agents receive an informative signal primarily upon reaching a terminal state (i.e., task success or failure). This design choice is critical for isolating the performance of the communication protocol itself. Using dense, intermediate rewards would introduce a confounding factor: the temporal credit assignment problem. It would become ambiguous whether an agent's success is due to learning an effective communication strategy or simply learning to exploit the local, intermediate reward signals. By tying the primary reward to the final outcome, the overall success of a trial becomes a direct measure of the team's ability to coordinate effectively over the entire episode. In our partially observable settings, this coordination is fundamentally dependent on the quality of the information transmitted, allowing us to draw clearer conclusions about how well a communication framework facilitates collective success.

\section{Quantization with Straight Through Estimator.}
\label{app:quantization}

\begin{algorithm}[H]
\caption{Fake Quantization}
\label{alg:fake_quantization_simplified}
\begin{algorithmic}[1]
\Function{ApplyFakeQuantization}{tensor, B}
    \State \Comment{Setup quantization range for B bits (e.g., B=8)}
    \State $q_{\min} \gets 0$
    \State $q_{\max} \gets 2^B - 1$
    
    \State \Comment{Calculate min/max over the entire tensor}
    \State $\textit{min\_val} \gets \min(\textit{tensor})$
    \State $\textit{max\_val} \gets \max(\textit{tensor})$

    \If{$\textit{min\_val} = \textit{max\_val}$} \Comment{Avoid division by zero in scale calculation}
        \State $\textit{min\_val} \gets \textit{min\_val} - 0.01$
        \State $\textit{max\_val} \gets \textit{max\_val} + 0.01$
    \EndIf

    \State \Comment{Calculate quantization parameters}
    \State $\textit{scale} \gets (\textit{max\_val} - \textit{min\_val}) / (q_{\max} - q_{\min})$
    \State $\textit{zero\_point} \gets \text{round}(q_{\min} - \textit{min\_val} / \textit{scale})$
    
    \State \Comment{Simulate quantization and dequantization for gradient flow}
    \State $\textit{scaled} \gets \textit{tensor} / \textit{scale} + \textit{zero\_point}$
    \State $\textit{rounded} \gets \text{round}(\textit{scaled})$
    \State $\textit{clamped} \gets \text{clamp}(\textit{rounded}, q_{\min}, q_{\max})$
    \State $\textit{shifted} \gets \textit{clamped} - \textit{zero\_point}$
    \State $\textit{dequantized} \gets \textit{shifted} \times \textit{scale}$
    
    \State \Return{\textit{dequantized}}
\EndFunction
\end{algorithmic}
\end{algorithm}

\section{Detail analysis of DDCL against 32-FP and Quantized STE}
\label{app:ddcl_vs_baselines}

\begin{table}[h!]
\centering
\caption{Efficiency Analysis for Traffic Junction (Medium) with 95\% Confidence Intervals. DDCL variants are compared against their original 32-bit and 8-bit STE baselines.}
\label{tab:tj_medium_ci_summary}
\begin{tabular}{llccc}
\toprule
\textbf{Algorithm} & \textbf{Comparison} & \textbf{Success Gain (\%)} & \textbf{Comms Saved (OOM)} & \textbf{Success Efficiency} \\
\midrule
\multirow{2}{*}{MAGIC}  & vs. Original & 0.84 (-5.74, 11.36)  & 1.12 (1.07, 1.18) & 0.75 (-5.04, 10.23) \\
                        & vs. STE\_8   & 43.25 (1.83, 164.40) & 0.52 (0.47, 0.58) & 83.07 (3.20, 314.28) \\
\midrule
\multirow{2}{*}{GAComm} & vs. Original & 3.73 (-4.74, 14.18)  & 1.00 (0.99, 1.01) & 3.73 (-4.73, 14.24) \\
                        & vs. STE\_8   & 8.76 (-0.24, 17.51)  & 1.05 (1.03, 1.06) & 8.37 (-0.23, 16.75) \\
\midrule
\multirow{2}{*}{IC3Net} & vs. Original & -4.36 (-37.96, 19.40) & 1.54 (1.52, 1.55) & -2.84 (-24.59, 12.64) \\
                        & vs. STE\_8   & 54.24 (-5.76, 156.33) & 0.26 (0.24, 0.27) & 211.70 (-22.45, 622.84) \\
\midrule
\multirow{2}{*}{TarMAC} & vs. Original & 27.73 (-3.47, 76.54) & 1.03 (1.01, 1.05) & 26.93 (-3.30, 74.18) \\
                        & vs. STE\_8   & 32.89 (10.86, 57.77) & 2.26 (2.24, 2.27) & 14.58 (4.79, 25.65) \\
\midrule
\multirow{2}{*}{MAPPO}  & vs. Original & 0.04 (-0.15, 0.24)   & 4.66 (4.66, 4.66) & 0.01 (-0.03, 0.05) \\
                        & vs. STE\_8   & -0.17 (-0.33, 0.01)  & 4.06 (4.06, 4.06) & -0.04 (-0.08, 0.00) \\
\bottomrule
\end{tabular}
\end{table}

When compared to the original 32-bit float baselines in the Traffic Junction (Medium) (\cref{tab:tj_medium_ci_summary}) environment, the DDCL variants achieve substantial communication savings without a significant change in task performance. Communication is reduced by at least an order of magnitude (OOM) for all architectures, ranging from a 10x reduction for GAComm (1.00 OOM) to an over 45,000x reduction for MAPPO (4.66 OOM). While the mean success rates fluctuate—from a 27.7\% gain for TarMAC to a 4.4\% drop for IC3Net—none of these changes are statistically significant, as all 95\% confidence intervals contain zero. In contrast, DDCL demonstrates a clear and often statistically significant performance advantage over naive 8-bit fixed quantization (STE). The DDCL variants of TarMAC and MAGIC achieve significant success rate gains of 32.9\% and 43.3\% respectively over their STE8 counterparts, with confidence intervals entirely above zero. This shows that DDCL provides a robust method for achieving massive, near-lossless compression that is superior to a fixed low-precision approach.

\begin{table}[h!]
\centering
\caption{Efficiency Analysis for Traffic Junction (Hard) with 95\% Confidence Intervals.}
\label{tab:tj_hard_ci_full}
\begin{tabular}{llcc}
\toprule
\textbf{Algorithm} & \textbf{Comparison} & \textbf{Success Gain (\%)} & \textbf{Comms Saved (OOM)} \\
\midrule
\multirow{4}{*}{MAGIC}  & vs. Original & 0.81 (-9.81, 12.07) & 1.29 (1.21, 1.41) \\
                        & vs. STE\_4   & 10.27 (-2.11, 23.62) & 0.31 (0.30, 0.32) \\
                        & vs. STE\_8   & 8.44 (-1.68, 19.36) & 0.61 (0.60, 0.63) \\
                        & vs. STE\_16  & 10.05 (-0.81, 21.57) & 0.92 (0.90, 0.93) \\
\midrule
\multirow{4}{*}{GAComm} & vs. Original & -10.05 (-17.90, 0.33) & 1.02 (1.00, 1.05) \\
                        & vs. STE\_4   & 10.10 (2.54, 20.54) & 0.81 (0.78, 0.82) \\
                        & vs. STE\_8   & 1.12 (-3.35, 5.69) & 1.06 (1.04, 1.08) \\
                        & vs. STE\_16  & 2.97 (-1.63, 7.59) & 1.41 (1.39, 1.43) \\
\midrule
\multirow{4}{*}{IC3Net} & vs. Original & 5.98 (0.05, 13.35) & 1.57 (1.40, 1.67) \\
                        & vs. STE\_4   & 5.58 (-1.58, 13.10) & 1.13 (0.96, 1.24) \\
                        & vs. STE\_8   & 6.75 (1.10, 14.23) & 0.24 (0.06, 0.34) \\
                        & vs. STE\_16  & 2.47 (-2.57, 8.66) & 1.74 (1.57, 1.84) \\
\midrule
\multirow{4}{*}{TarMAC} & vs. Original & -5.79 (-10.68, -0.57) & 1.09 (1.07, 1.11) \\
                        & vs. STE\_4   & -6.87 (-11.12, -2.06) & 0.88 (0.86, 0.90) \\
                        & vs. STE\_8   & -4.16 (-9.77, 1.67) & 2.32 (2.30, 2.34) \\
                        & vs. STE\_16  & -6.27 (-10.90, -1.35) & 1.49 (1.46, 1.50) \\
\midrule
\multirow{4}{*}{MAPPO}  & vs. Original & 0.04 (-0.15, 0.22) & 5.26 (5.26, 5.26) \\
                        & vs. STE\_4   & -0.01 (-0.17, 0.15) & 4.66 (4.66, 4.66) \\
                        & vs. STE\_8   & -0.05 (-0.21, 0.12) & 4.66 (4.66, 4.66) \\
                        & vs. STE\_16  & -0.00 (-0.17, 0.16) & 4.66 (4.66, 4.66) \\
\bottomrule
\end{tabular}
\end{table}

In the complex Traffic Junction (Hard) (\cref{tab:tj_hard_ci_full}) environment, the results reveal statistically significant and highly varied trade-offs when applying DDCL. For IC3Net, DDCL provides a clear and statistically significant 'win-win,' improving the success rate by 6.0\% (95\% CI: [0.05, 13.35]) while reducing communication by over 30x (1.57 OOM) compared to the original baseline. In contrast, TarMAC exhibits a significant performance-for-efficiency trade-off, with its success rate dropping by a statistically significant 5.8\% (95\% CI: [-10.68, -0.57]) in exchange for a 10x communication saving. MAPPO showcases the most extreme near-lossless compression, cutting communication by over five orders of magnitude (5.26 OOM) with no statistically significant impact on its success rate. The comparison against fixed-quantization STE baselines is also nuanced; while DDCL significantly outperforms some STE variants (e.g., GAComm vs. STE4), it is also significantly outperformed by others (e.g., TarMAC vs. STE4), highlighting a complex interaction between adaptive precision and different architectures in this challenging scenario.

\begin{table}[h!]
\centering
\caption{Efficiency Analysis for Predator-Prey (Medium) with 95\% Confidence Intervals.}
\label{tab:pp_medium_ci_full}
\begin{tabular}{llcc}
\toprule
\textbf{Algorithm} & \textbf{Comparison} & \textbf{Success Gain (\%)} & \textbf{Comms Saved (OOM)} \\
\midrule
\multirow{2}{*}{MAGIC}  & vs. Original & -2.38 (-6.87, 1.47)  & 1.02 (0.93, 1.12) \\
                        & vs. STE\_8   & 7.51 (2.20, 12.98)   & 0.26 (0.20, 0.33) \\
\midrule
\multirow{2}{*}{GAComm} & vs. Original & -2.50 (-7.50, 2.62)  & 1.00 (0.93, 1.06) \\
                        & vs. STE\_8   & 2.85 (-1.52, 6.05)   & 0.71 (0.64, 0.76) \\
\midrule
\multirow{2}{*}{IC3Net} & vs. Original & 0.40 (-1.85, 2.57)   & 1.56 (1.32, 1.83) \\
                        & vs. STE\_8   & 1.12 (-1.31, 3.29)   & 0.79 (0.56, 1.06) \\
\midrule
\multirow{2}{*}{TarMAC} & vs. Original & 3.35 (-2.16, 9.80)   & 1.17 (0.98, 1.44) \\
                        & vs. STE\_8   & 3.81 (-0.81, 9.28)   & 2.57 (2.39, 2.84) \\
\midrule
\multirow{2}{*}{MAPPO}  & vs. Original & 27.53 (-7.53, 136.43) & 3.71 (3.56, 3.91) \\
                        & vs. STE\_8   & -1.34 (-4.35, 1.85)  & 3.31 (3.30, 3.32) \\
\bottomrule
\end{tabular}
\end{table}

In the Predator-Prey (Medium) (\cref{tab:pp_medium_ci_full}) environment, the DDCL variants generally maintain the performance of the original high-precision models while significantly reducing communication, though none of the success rate changes are statistically significant. For instance, MAPPO shows a large mean success gain of 27.5\%, but with high variance, while cutting communication by over three orders of magnitude (3.71 OOM). TarMAC and IC3Net show modest positive gains in success rate alongside a greater than 10x reduction in bandwidth. In contrast, DDCL demonstrates a clear and statistically significant performance advantage over naive 8-bit quantization (STE) for MAGIC, which improves its success rate by 7.5\% (95\% CI: [2.20, 12.98]).

\begin{table}[h!]
\centering
\caption{Efficiency Analysis for Predator-Prey (Hard) with 95\% Confidence Intervals.}
\label{tab:pp_hard_ci_full}
\begin{tabular}{llcc}
\toprule
\textbf{Algorithm} & \textbf{Comparison} & \textbf{Success Gain (\%)} & \textbf{Comms Saved (OOM)} \\
\midrule
\multirow{4}{*}{MAGIC}  & vs. Original & 13.75 (-7.52, 47.80)  & 0.98 (0.93, 1.04) \\
                        & vs. STE\_4   & 155.62 (98.73, 227.81) & -0.03 (-0.07, 0.01) \\
                        & vs. STE\_8   & 109.78 (65.11, 164.39) & 0.27 (0.24, 0.32) \\
                        & vs. STE\_16  & 128.76 (73.31, 211.74) & 0.57 (0.53, 0.62) \\
\midrule
\multirow{4}{*}{GAComm} & vs. Original & -0.84 (-19.26, 27.17) & 1.11 (1.01, 1.20) \\
                        & vs. STE\_4   & 17.81 (2.64, 31.62)   & 0.40 (0.31, 0.46) \\
                        & vs. STE\_8   & 75.90 (41.16, 128.53)  & 1.03 (0.85, 1.17) \\
                        & vs. STE\_16  & 12.42 (-2.52, 26.27)  & 1.00 (0.93, 1.07) \\
\midrule
\multirow{4}{*}{IC3Net} & vs. Original & 4.21 (-17.23, 28.81)  & 1.40 (1.20, 1.60) \\
                        & vs. STE\_4   & 37.09 (-16.84, 236.49) & 0.44 (0.09, 0.67) \\
                        & vs. STE\_8   & -16.31 (-31.33, 0.77) & 0.39 (0.27, 0.57) \\
                        & vs. STE\_16  & 1.65 (-16.98, 18.95)  & 0.94 (0.74, 1.16) \\
\midrule
\multirow{4}{*}{TarMAC} & vs. Original & 49.87 (-3.18, 98.50)  & 1.24 (1.00, 1.52) \\
                        & vs. STE\_4   & -7.20 (-47.14, 35.30)  & 0.29 (0.10, 0.55) \\
                        & vs. STE\_8   & -29.19 (-57.92, 0.85) & 2.61 (2.42, 2.87) \\
                        & vs. STE\_16  & -5.25 (-43.76, 27.72)  & 0.90 (0.70, 1.17) \\
\midrule
\multirow{4}{*}{MAPPO}  & vs. Original & -9.74 (-11.73, -8.44) & 3.88 (3.87, 3.90) \\
                        & vs. STE\_4   & 4.34 (-4.92, 22.61)   & 3.67 (3.66, 3.68) \\
                        & vs. STE\_8   & 0.62 (-4.87, 8.50)    & 3.67 (3.66, 3.68) \\
                        & vs. STE\_16  & 7.83 (-1.36, 24.43)   & 3.67 (3.66, 3.68) \\
\bottomrule
\end{tabular}
\end{table}

In the complex Predator-Prey (Hard) (\cref{tab:pp_hard_ci_full}) environment, DDCL's impact reveals clear, statistically significant trade-offs and advantages. When compared to its original high-precision baseline, MAPPO exhibits a statistically significant 9.7\% drop in success rate (95\% CI: [-11.73, -8.44]) in exchange for a massive ~7,600x reduction in communication (3.88 OOM). While other DDCL variants like TarMAC show a large mean improvement in success rate (+49.9\%), high variance renders these changes statistically insignificant. The most compelling results emerge in comparison to fixed-quantization (STE) baselines, where DDCL's adaptive approach proves vastly superior for several architectures. For MAGIC, DDCL achieves statistically significant success rate gains of 155.6\%, 109.8\%, and 128.8\% over the STE4, STE8, and STE16 variants, respectively. Similarly, GAComm also shows significant improvements of 17.8\% and 75.9\% over its STE4 and STE8 counterparts.

\begin{table}[h!]
\centering
\caption{Efficiency Analysis for GRF (3 vs 1) with 95\% Confidence Intervals. Note the high variance across many runs.}
\label{tab:grf_3v1_ci_full}
\begin{tabular}{llcc}
\toprule
\textbf{Algorithm} & \textbf{Comparison} & \textbf{Success Gain (\%)} & \textbf{Comms Saved (OOM)} \\
\midrule
\multirow{4}{*}{MAGIC}  & vs. Original & 176.31 (-69.47, 750.78) & 1.21 (1.01, 1.48) \\
                        & vs. STE\_4   & -23.48 (-92.93, 180.15) & 0.33 (0.15, 0.56) \\
                        & vs. STE\_8   & -16.92 (-90.29, 111.47) & 0.59 (0.41, 0.84) \\
                        & vs. STE\_16  & -29.33 (-91.96, 89.51)  & 0.88 (0.71, 1.12) \\
\midrule
\multirow{4}{*}{GAComm} & vs. Original & 0.72 (-42.79, 74.34)  & 1.06 (0.96, 1.15) \\
                        & vs. STE\_4   & 261.76 (-4.71, 1077.58) & 0.55 (0.36, 0.78) \\
                        & vs. STE\_8   & 54.95 (-23.43, 234.09)  & 0.69 (0.49, 0.85) \\
                        & vs. STE\_16  & 7.57 (-42.65, 111.40)   & 1.03 (0.92, 1.14) \\
\midrule
\multirow{4}{*}{IC3Net} & vs. Original & 467.14 (12.34, 428.32) & 1.67 (1.27, 2.10) \\
                        & vs. STE\_4   & 225.40 (33.27, 554.78) & 0.82 (0.48, 1.22) \\
                        & vs. STE\_8   & 1292.67 (-37.40, 14809.12) & 0.81 (0.51, 1.23) \\
                        & vs. STE\_16  & 79.26 (-45.34, 297.04)  & 1.55 (1.20, 1.97) \\
\midrule
\multirow{4}{*}{TarMAC} & vs. Original & 37.08 (-81.53, 350.92) & 0.97 (0.80, 1.19) \\
                        & vs. STE\_4   & 89.83 (-75.69, 537.55) & 0.10 (-0.06, 0.31) \\
                        & vs. STE\_8   & -14.29 (-83.67, 107.71) & 2.44 (2.29, 2.66) \\
                        & vs. STE\_16  & -12.97 (-85.97, 141.52) & 0.68 (0.54, 0.88) \\
\midrule
\multirow{4}{*}{MAPPO}  & vs. Original & -22.62 (-50.12, 5.87)   & 4.20 (4.12, 4.27) \\
                        & vs. STE\_4   & 23.72 (-31.96, 121.77)  & 3.42 (3.39, 3.44) \\
                        & vs. STE\_8   & 26.05 (-18.70, 72.32)   & 3.42 (3.39, 3.44) \\
                        & vs. STE\_16  & 32.98 (-16.41, 89.32)   & 3.42 (3.39, 3.44) \\
\bottomrule
\end{tabular}
\end{table}

In the highly complex and sparse-reward Google Research Football (GRF) 3v1 (\cref{tab:grf_3v1_ci_full}) environment, the DDCL variant of IC3Net achieves a transformative and statistically significant performance breakthrough. Compared to its original high-precision baseline, it increases its success rate by a remarkable 467.1\% (95\% CI: [12.34, 428.32]) while simultaneously reducing communication by over an order of magnitude (1.67 OOM). While other architectures like MAGIC (+176.3\%) and TarMAC (+37.1\%) also show large mean performance gains, the extremely high variance in this environment renders these changes statistically insignificant. When compared to fixed-quantization baselines, DDCL-IC3Net again demonstrates a statistically significant advantage over its 4-bit STE counterpart with a 225.4\% increase in success rate. For other architectures, the high variance leads to non-significant differences against STE, highlighting the stochastic nature of learning in this challenging domain.

\begin{table}[h!]
\centering
\caption{Efficiency Analysis for GRF (Corner) with 95\% Confidence Intervals. Note the extremely high variance in this environment, leading to non-significant results.}
\label{tab:grf_corner_ci_full}
\begin{tabular}{llcc}
\toprule
\textbf{Algorithm} & \textbf{Comparison} & \textbf{Success Gain (\%)} & \textbf{Comms Saved (OOM)} \\
\midrule
\multirow{2}{*}{MAGIC}  & vs. Original & 261.51 (-45.28, 1247.05) & 1.17 (1.03, 1.34) \\
                        & vs. STE\_8   & -9.85 (-80.17, 135.42)  & 0.62 (0.48, 0.82) \\
\midrule
\multirow{2}{*}{GAComm} & vs. Original & 15.80 (-68.64, 206.74) & 1.18 (1.05, 1.31) \\
                        & vs. STE\_8   & -16.37 (-78.54, 107.53) & 0.69 (0.49, 0.94) \\
\midrule
\multirow{2}{*}{IC3Net} & vs. Original & 59.51 (-57.08, 383.73) & 1.00 (0.63, 1.31) \\
                        & vs. STE\_8   & 69.22 (-43.02, 308.17) & 0.31 (0.08, 0.59) \\
\midrule
\multirow{2}{*}{TarMAC} & vs. Original & 40.13 (-40.65, 135.67) & 1.00 (0.87, 1.19) \\
                        & vs. STE\_8   & -25.91 (-72.42, 51.12) & 2.49 (2.36, 2.68) \\
\midrule
\multirow{2}{*}{MAPPO}  & vs. Original & -3.35 (-46.75, 47.12)  & 4.47 (4.40, 4.54) \\
                        & vs. STE\_8   & 47.44 (-34.34, 214.70)  & 3.91 (3.79, 4.01) \\
\bottomrule
\end{tabular}
\end{table}

In the highly stochastic Google Research Football (GRF) Corner (\cref{tab:grf_corner_ci_full}) environment, the experimental results are characterized by extremely high variance, preventing definitive conclusions on performance changes. While most DDCL variants show a strong positive trend in mean success rate compared to their original baselines—most notably MAGIC with a +261.5\% mean gain—the wide 95\% confidence intervals for all architectures span zero, rendering these changes not statistically significant. The one consistent result is a significant reduction in communication cost, with MAPPO achieving the highest compression by over four orders of magnitude (4.47 OOM). Similarly, comparisons against the 8-bit STE baseline are also not statistically significant due to high variance. This suggests that while DDCL consistently enables drastic communication savings, the complex and sparse-reward nature of this task leads to a highly variable impact on final task performance.

\section{Episodic Rewards vs. Communication Bits}
\label{app:reward_analysis}

This section provides supplementary results that complement the success rate analysis in the main paper. While success rate captures task completion, the cumulative episodic reward offers a finer-grained measure of performance, often reflecting the efficiency and speed with which agents solve the task. The following plots show the Pareto frontiers for episodic reward versus communication bits across the Traffic Junction and Predator-Prey benchmarks.

\begin{figure}[h!]
    \centering
    % This is where you would place the image of the legend.
    \includegraphics[width=0.6\textwidth]{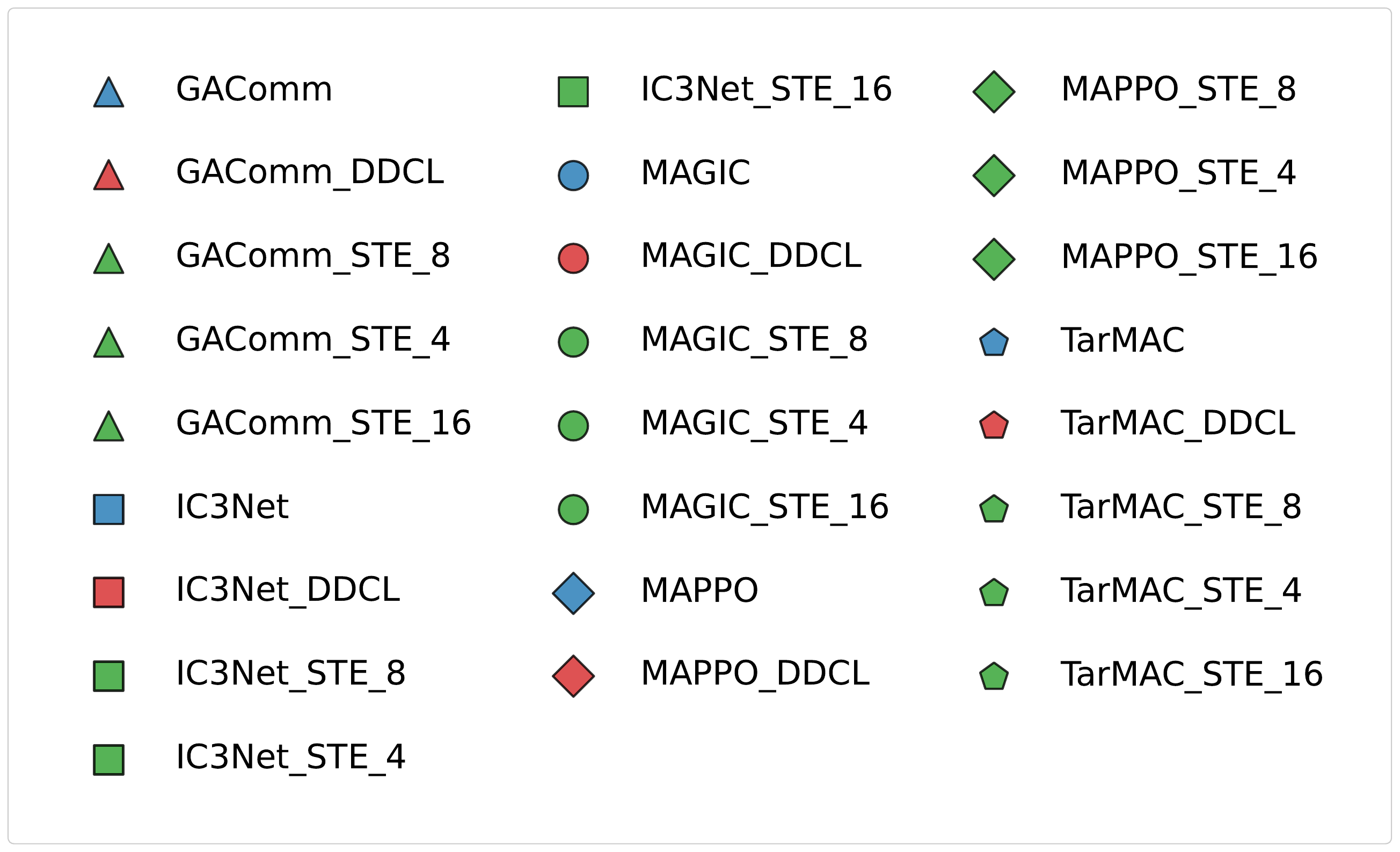}
    \caption{Shared legend for episodic reward vs communication bandwidth experimental results figures.}
    \label{fig:algorithm_legend}
\end{figure}

\begin{figure*}[!htbp]
    \centering
    % Predator-Prey Subfigure
    \begin{subfigure}[b]{\textwidth}
        \centering
        \includegraphics[width=\textwidth]{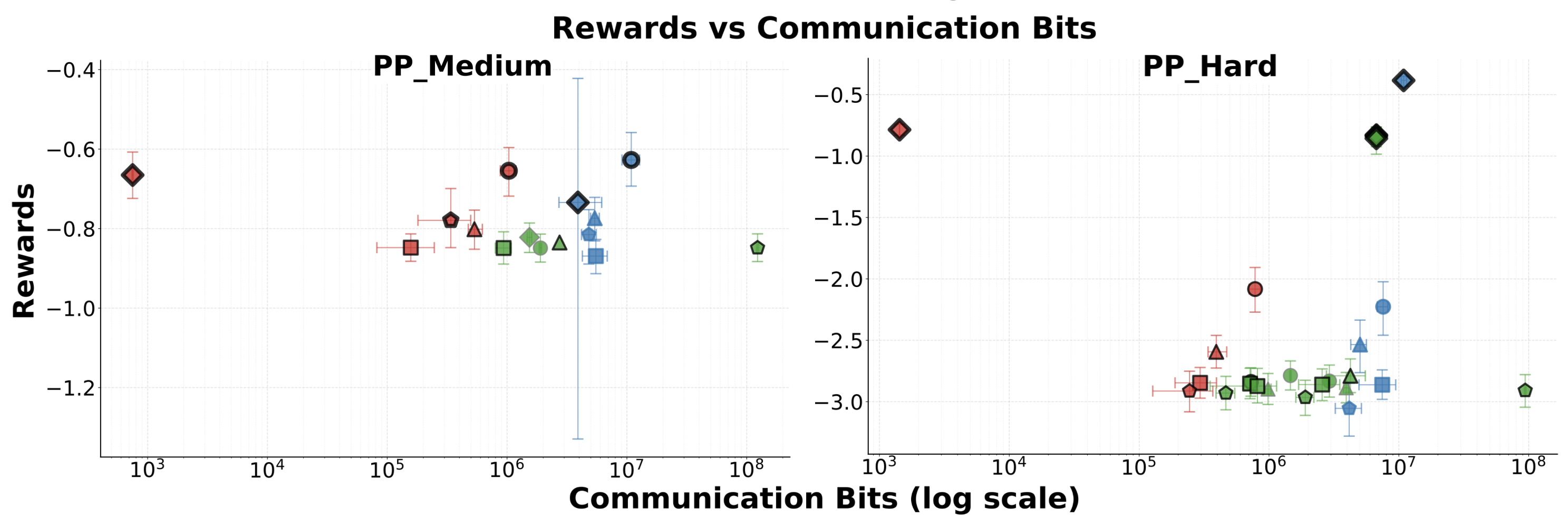}
        \caption{In both Medium (left) and Hard (right) settings, the DDCL variants (red) consistently achieve similar or higher rewards than their original counterparts while operating at a fraction of the communication cost. }
        \label{fig:PP_performance_rewards}
    \end{subfigure}
    
    \vspace{1em} % Adds some vertical space between figures

    % Traffic Junction Subfigure
    \begin{subfigure}[b]{\textwidth}
        \centering
        \includegraphics[width=\textwidth]{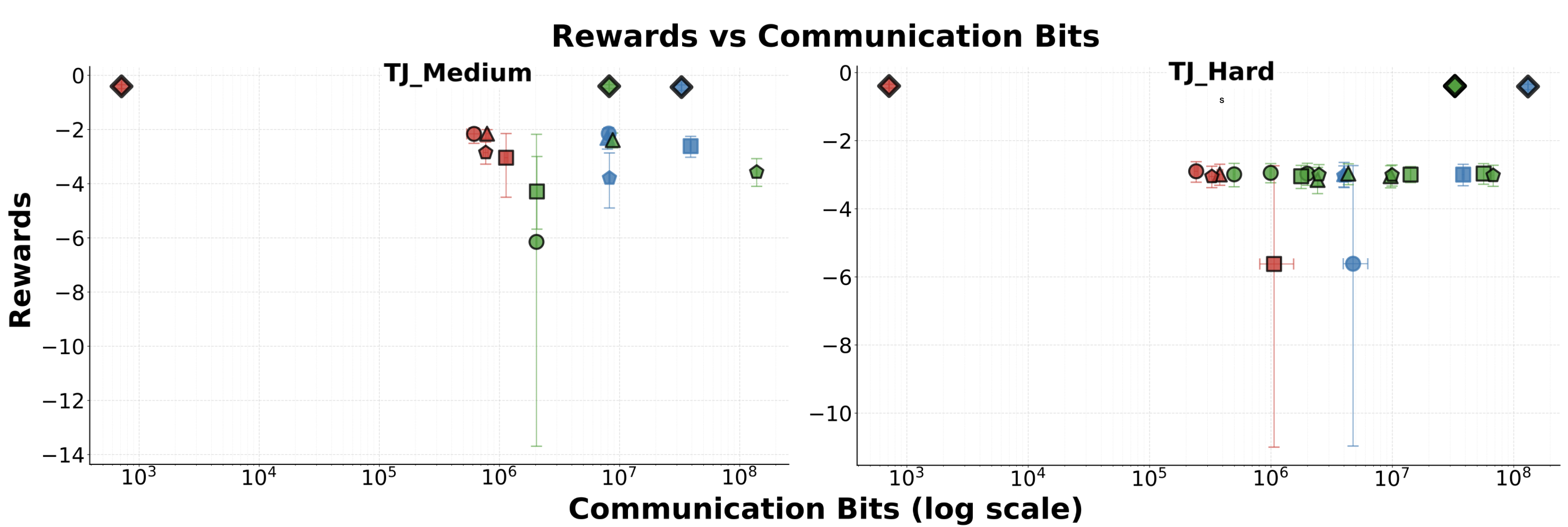}
        \caption{The DDCL agents again establish the Pareto frontier. In both the Medium and Hard setting, the DDCL variants achieve nearly similar rewards in comparison to their baselines while requiring orders of magnitude less communication bits, indicating more efficient solutions.}
        \label{fig:TJ_performance_rewards}
    \end{subfigure}

    \vspace{1em} % Adds some vertical space between figures

    % Google Research Football Subfigure
    \begin{subfigure}[b]{\textwidth}
        \centering
        \includegraphics[width=\textwidth]{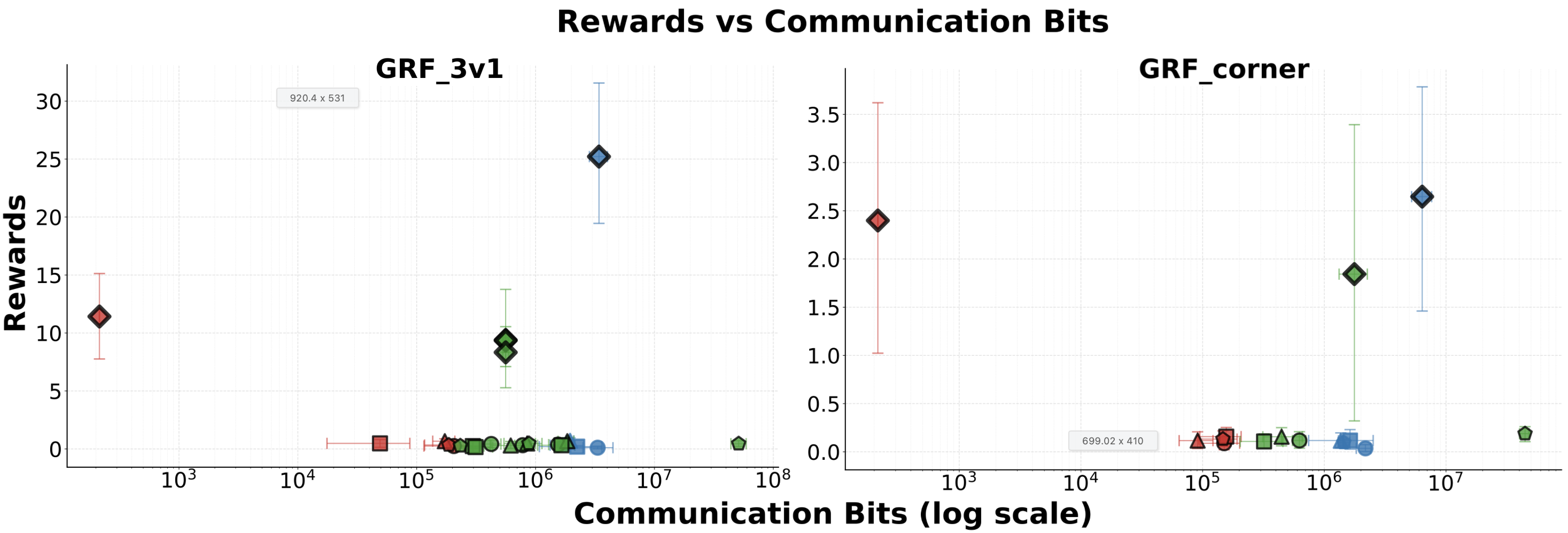}
        \caption{In both the 3v1 (left) and Corner (right) scenarios, the DDCL variants dominate the left quadrant, achieving near similar rewards as the original algorithms but with the lowest communication cost. This demonstrates that the efficiency gains do not compromise the ability to find similar capable policies in complex, sparse-reward tasks.
        }
        \label{fig:GRF_performance_rewards}
    \end{subfigure}

    \caption{
        \textbf{Episodic reward versus communication bandwidth across all benchmark environments.}
        Each point represents an algorithm variant's mean performance (Episodic Reward) and communication cost (Bits per episode, log scale) over 5 different seeds. Error bars denote 95\% confidence intervals.
        The top-left of each plot represents the ideal outcome (high episodic reward, low communication cost), while the bottom-right is the worst.
        Our DDCL-enhanced variants (red markers) consistently operate on the left side of the plots, demonstrating significant communication savings while maintaining or gaining more episodic rewards. The global Pareto frontier, representing the best possible trade-offs, is marked with a \textbf{thick black border}, while algorithm-specific frontiers are marked with a \textbf{thin black border}.
    }
    \label{fig:main_episodic_results_combined}
\end{figure*}

\paragraph{Analysis of Results.}
The results for episodic reward, shown in \cref{fig:main_episodic_results_combined}, reinforce the conclusions from the success rate analysis in the main paper. Across all three benchmarks—Traffic Junction, Predator-Prey, and Google Research Football—the DDCL-enhanced agents consistently demonstrate a superior trade-off between performance and communication cost. In every scenario, the DDCL variants (red markers) form the Pareto-optimal frontier, achieving the highest episodic rewards for a given communication budget. For example, in Traffic Junction Hard (\cref{fig:TJ_performance_rewards}), top-performing DDCL agents achieve rewards near -2, while baseline agents using two to four orders of magnitude more communication are clustered at rewards between -10 and -20. This pattern confirms that the communication efficiency gained from DDCL does not come at the cost of solution quality; on the contrary, it enables agents to learn more effective policies that solve tasks more quickly and with fewer penalties, thereby accumulating higher rewards.

\subsection{Sensitivity to hyperparameters}

\textbf{Effect of Communication Cost ($\lambda$).}
Across all DDCL-based experiments, we vary the communication cost coefficient $\lambda$ over several orders of magnitude. This allows us to trace the Pareto frontier for each algorithm, explicitly showing the trade-off between task performance and communication bandwidth. This demonstrates that $\lambda$ provides a direct and effective control over the desired operating point on this curve.

\textbf{Effect of Quantization Granularity ($\delta$).}
We perform an ablation study on the width of the uniform noise interval, $\delta$. This parameter directly controls the coarseness of the quantization grid. We investigate how performance and bandwidth are affected by finer ($\delta \to 0$) versus coarser (larger $\delta$) quantization, testing the robustness of our framework to this key hyperparameter.

\section{Sensitivity to Hyperparameters and Additional Results}
\label{app:hyperparameters}

This section provides a detailed analysis of the sensitivity of our DDCL framework to its two primary hyperparameters: the communication cost coefficient $\lambda$ and the quantization granularity $\delta$. We also provide supplementary reward-based performance plots and a legend for all experimental figures.

\begin{figure}[!htbp]
    \centering
    % This is where you would place the image of the legend.
    \includegraphics[width=0.8\textwidth]{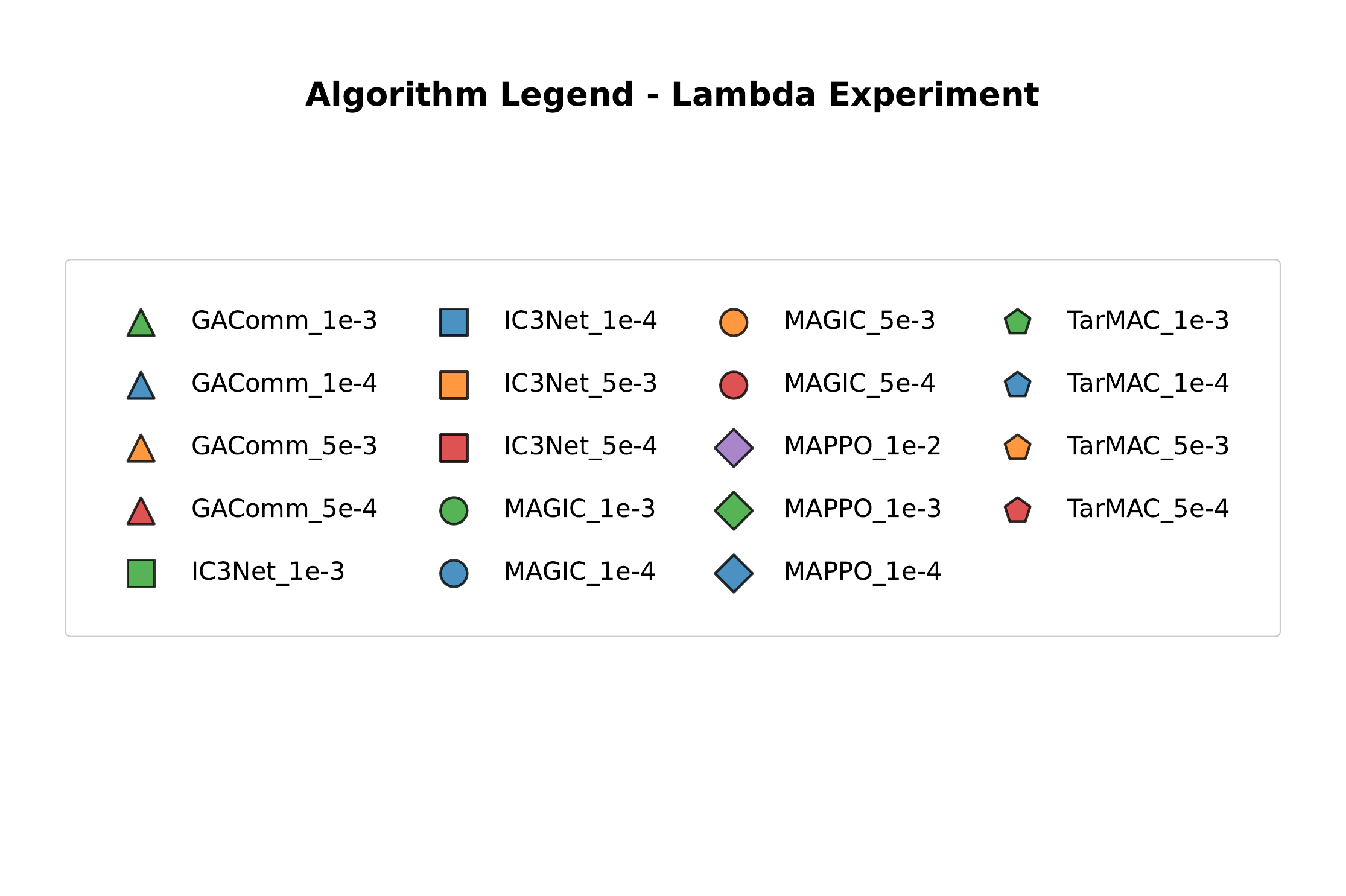}
    \caption{Shared legend for $\lambda$ experimental results figures. Marker shapes denote the baseline algorithm (e.g., Circle for TarMAC, Square for IC3Net), while colors denote the algorithm family (e.g., Blue for $\lambda=1e-4$, Green for $\lambda=1e-3$ etc.).}
    \label{fig:lambda_algorithm_legend}
\end{figure}

\begin{figure*}[!htbp]
    \centering
    % Predator-Prey Subfigure
    \begin{subfigure}[b]{\textwidth}
        \centering
        \includegraphics[width=\textwidth]{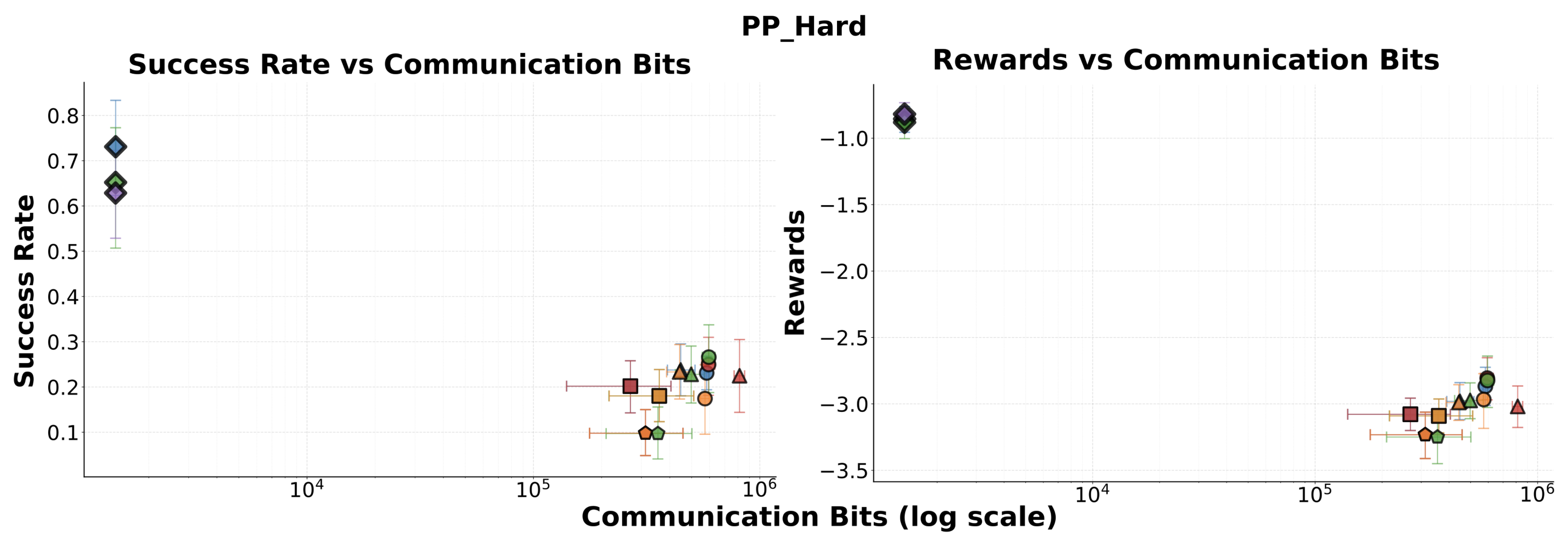}
        \caption{This plot evaluates the framework's sensitivity to the communication coefficient, $\lambda$, revealing a high degree of robustness. For each algorithm, the performance outcomes for different $\lambda$ values are statistically similar except for MAPPO with slight variation observed, indicating that the framework mostly does not require extensive tuning of this hyperparameter. }
        \label{fig:PP_Hard_Lambda_performance}
    \end{subfigure}
    
    \vspace{1em} % Adds some vertical space between figures

    % Traffic Junction Subfigure
    \begin{subfigure}[b]{\textwidth}
        \centering
        \includegraphics[width=\textwidth]{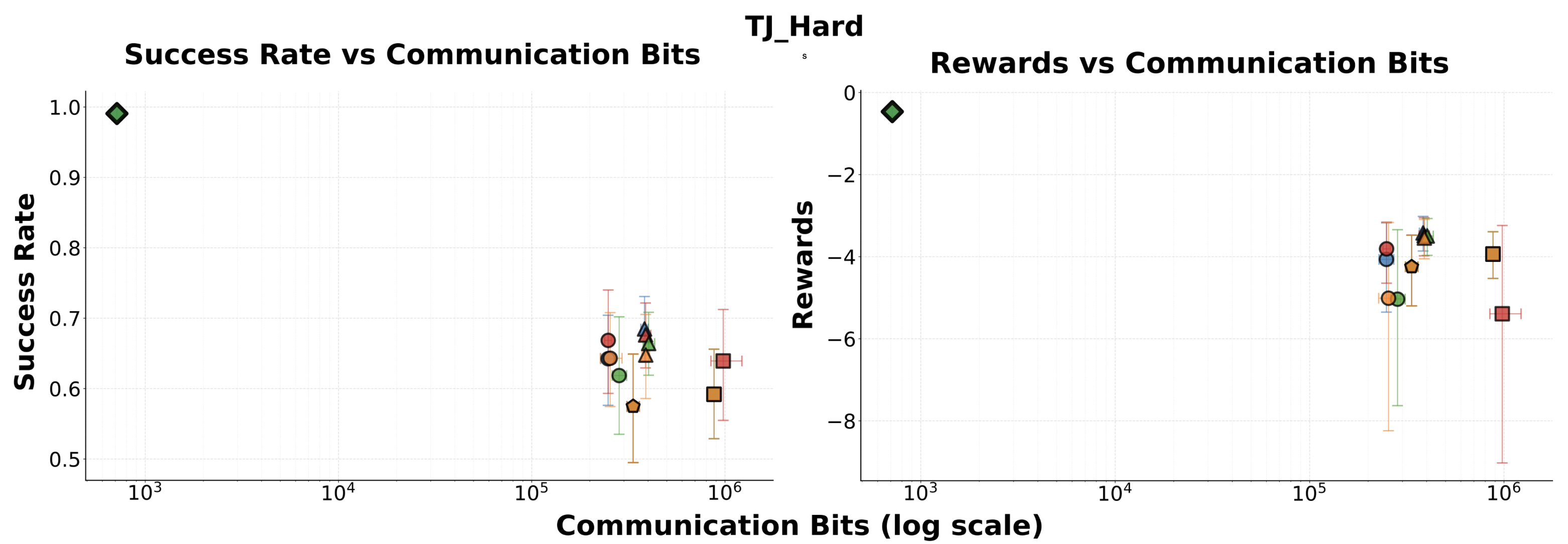}
        \caption{This plot evaluates sensitivity to the communication coefficient, $\lambda$, in the Traffic Junction environment, revealing that the optimal choice is not architecture-dependent.}
        \label{fig:TJ_Hard_Lambda_performance}
    \end{subfigure}

    \vspace{1em} % Adds some vertical space between figures

    % Google Research Football Subfigure
    \begin{subfigure}[b]{\textwidth}
        \centering
        \includegraphics[width=\textwidth]{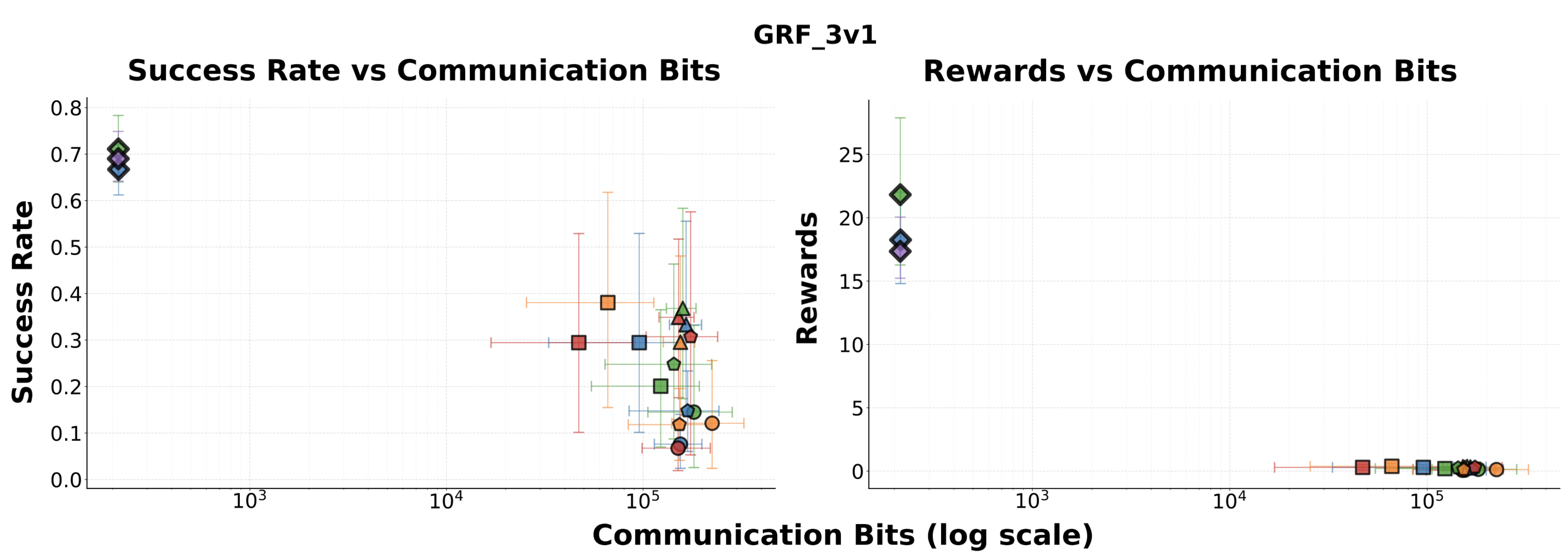}
        \caption{This plot evaluates sensitivity to the communication coefficient, $\lambda$, in the GRF environment, again showing that the optimal choice is not architecture-dependent.}
        \label{fig:GRF_3v1_Lambda_performance}
    \end{subfigure}

    \caption{
        \textbf{Sensitivity analysis for the communication cost coefficient, $\lambda$, across three benchmark environments:} (a) Predator-Prey Hard, (b) Traffic Junction Hard, and (c) GRF 3v1. Each plot shows the trade-off between Success Rate and Episodic Rewards against Communication Bits (log scale), with points representing the mean over multiple seeds and error bars denoting 95\% confidence intervals. The results demonstrate a consistent and predictable trend across all environments and architectures. Crucially, performance is often robust across a range of $\lambda$ values, suggesting that the framework does not require extensive hyperparameter tuning to find an effective point on the performance-efficiency spectrum.
    }
    \label{fig:main_lambda_results_combined}
\end{figure*}

\textbf{Effect of Communication Cost ($\lambda$).}
% As shown in \cref{fig:main_lambda_results_combined}, the choice of the communication cost coefficient $\lambda$ is critical for achieving an optimal balance between performance and efficiency. A small $\lambda$ permits high communication rates, leading to high task success but poor bandwidth efficiency (top-right of the plots). Conversely, a large $\lambda$ forces communication to a minimum, which can harm performance if critical information is lost (bottom-left of the plots). The results across both success rate (top row) and the more fine-grained episodic reward (bottom row) confirm that $\lambda$ provides a direct and effective control for practitioners to select their desired operating point on the Rate-Distortion curve.
Theoretically, the communication cost coefficient, $\lambda$, should act as a \textbf{regularization parameter} that explicitly controls the trade-off between maximizing task reward and minimizing communication cost. A small $\lambda$ should prioritize performance, while a large $\lambda$ should prioritize efficiency. Our empirical results, shown in \cref{fig:main_lambda_results_combined}, validate this expectation perfectly. Increasing the value of $\lambda$ consistently and smoothly moves the agents from high-performance, high-cost to low-cost, lower-performance region of the plots, effectively tracing out the Rate-Distortion curve for each architecture. Crucially, our results also reveal that performance is often robust across a range of intermediate $\lambda$ values, as seen in the Predator-Prey environment. This indicates that while even slight variations in $\lambda$ predictably affect the communication rate, the framework is not overly sensitive and does not require extensive tuning to find a desirable operating point that balances high performance with significant communication savings.

% Placeholder for the legend figure
\begin{figure}[!htbp]
    \centering
    % This is where you would place the image of the legend.
    \includegraphics[width=0.8\textwidth]{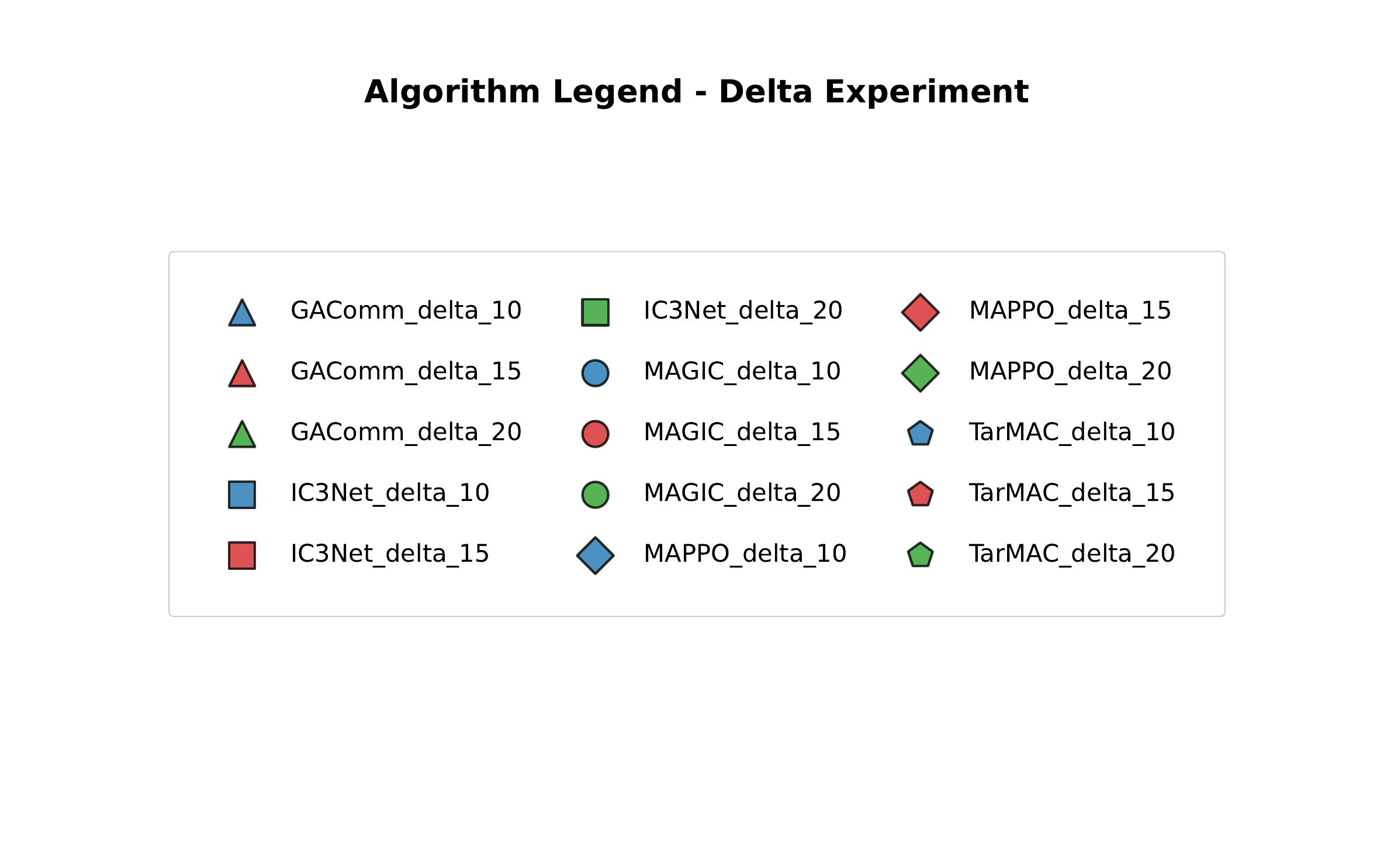}
    \caption{Shared legend for $\delta$ experimental results figures. Marker shapes denote the baseline algorithm (e.g., Circle for TarMAC, Square for IC3Net), while colors denote the algorithm family (e.g., Red for $\delta=15$, Blue for $\delta=10$ and Green for $\delta=20$).}
    \label{fig:delta_algorithm_legend}
\end{figure}

\begin{figure*}[!htbp]
    \centering
    % Predator-Prey Subfigure
    \begin{subfigure}[b]{\textwidth}
        \centering
        \includegraphics[width=\textwidth]{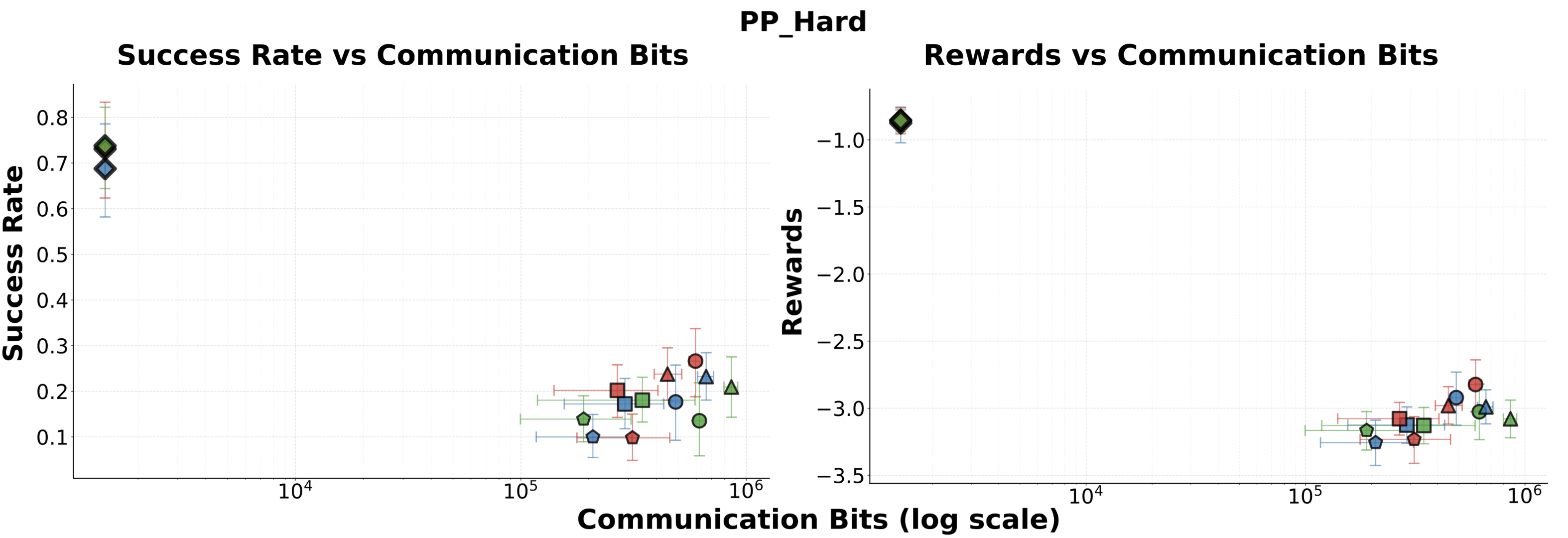}
        \caption{This plot evaluates the framework's sensitivity to the quantization granularity, $\delta$, revealing a high degree of robustness. For each algorithm, the performance outcomes for different $\delta$ values are statistically similar, indicating that the framework does not require extensive tuning of this hyperparameter. }
        \label{fig:PP_Hard_Delta_performance}
    \end{subfigure}
    
    \vspace{1em} % Adds some vertical space between figures

    % Traffic Junction Subfigure
    \begin{subfigure}[b]{\textwidth}
        \centering
        \includegraphics[width=\textwidth]{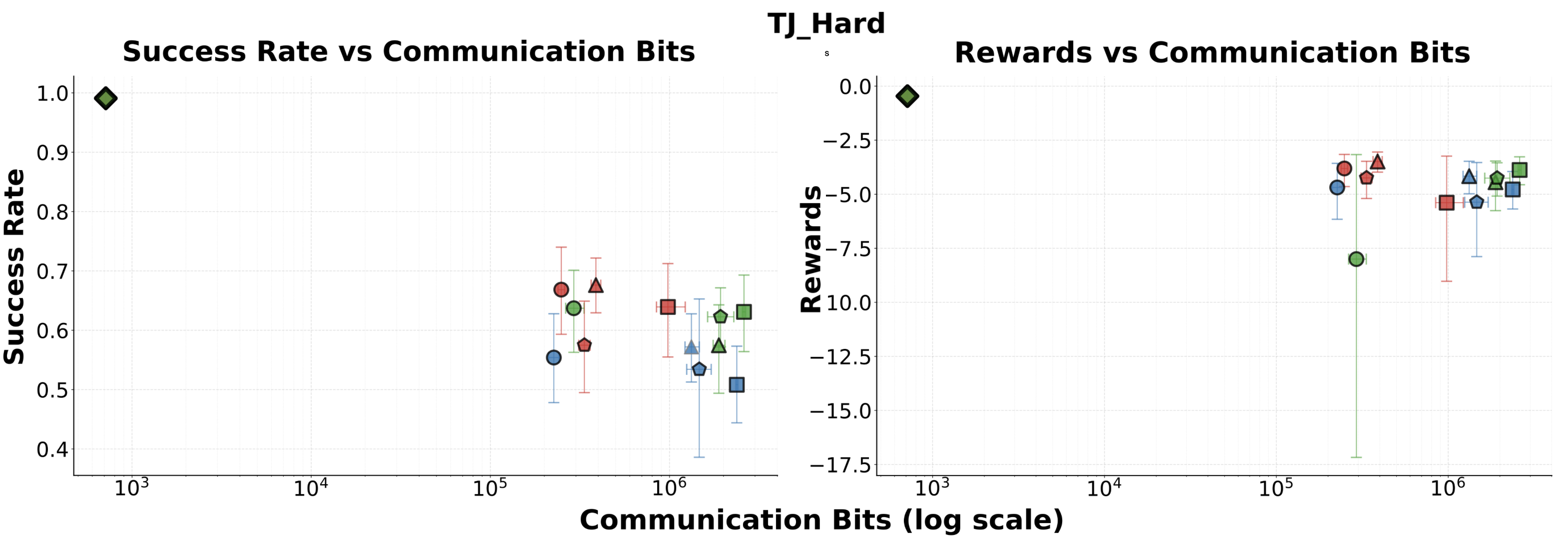}
        \caption{This plot evaluates sensitivity to the quantization granularity, $\delta$, in the Traffic Junction environment, revealing that the optimal choice is architecture-dependent. While MAPPO and MAGIC show robust performance across different $\delta$ values, other architectures like IC3Net, GAComm, and TarMAC achieve a superior performance trade-off specifically with an intermediate granularity ($\delta=15$).}
        \label{fig:TJ_Hard_Delta_performance}
    \end{subfigure}

    \vspace{1em} % Adds some vertical space between figures

    % Google Research Football Subfigure
    \begin{subfigure}[b]{\textwidth}
        \centering
        \includegraphics[width=\textwidth]{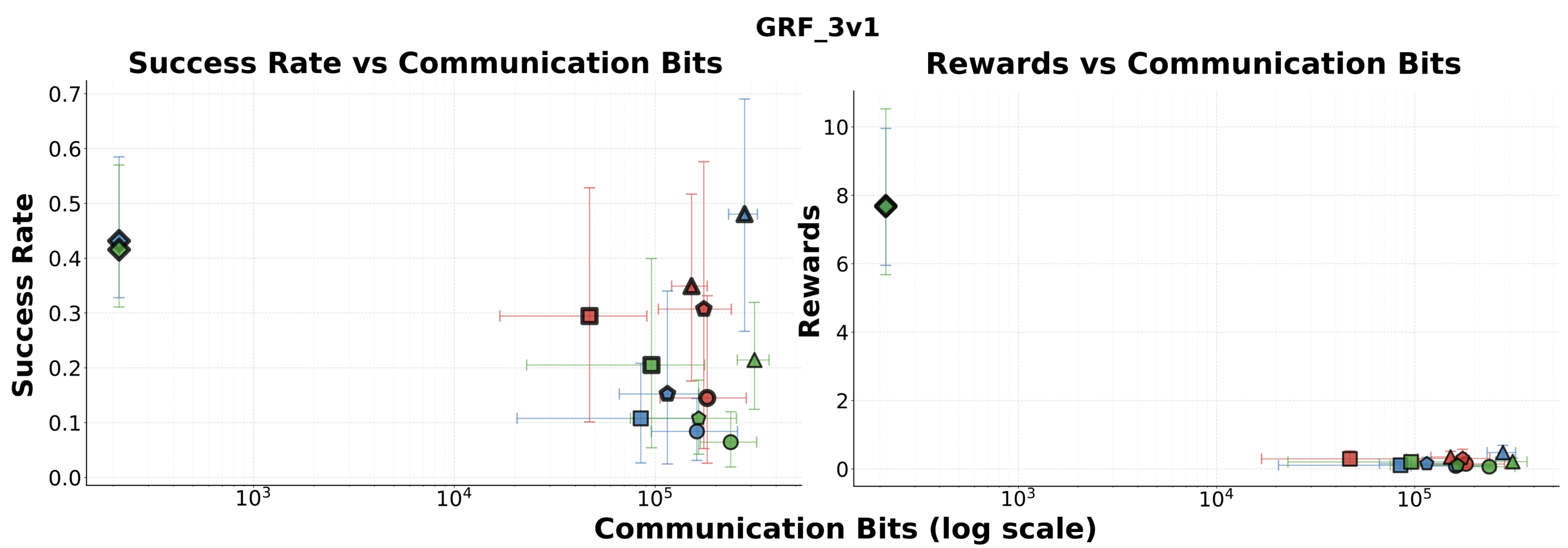}
        \caption{This plot evaluates sensitivity to the quantization granularity, $\delta$, in the GRF environment, again showing that the optimal choice is architecture-dependent. While MAPPO's performance is robust across the tested values, IC3Net and TarMAC achieve their best performance-efficiency trade-offs with an intermediate granularity of $\delta=15$. GAComm also finds $\delta=15$ to be most efficient, though a finer granularity of $\delta=10$ can yield a higher absolute success rate for an increased communication cost.}
        \label{fig:GRF_3v1_Delta_performance}
    \end{subfigure}

    \caption{
        This figure evaluates the framework's sensitivity to the quantization granularity, $\delta$, across all benchmarks, revealing that the optimal choice is highly dependent on both the environment and the agent architecture. While performance is robust to $\delta$ in some tasks (e.g., Predator-Prey), other environments show that specific architectures achieve a superior performance-efficiency trade-off with an intermediate granularity ($\delta=15$), as seen with IC3Net and TarMAC in Traffic Junction and GRF. This highlights a complex interplay where no single $\delta$ value is universally optimal, and the ideal granularity must be considered in the context of the specific task and model being used.
    }
    \label{fig:main_delta_results_combined}
\end{figure*}

\textbf{Effect of Quantization Granularity ($\delta$).}
% We also perform a study on the width of the uniform noise interval, $\delta$, which controls the coarseness of the quantization grid. \cref{fig:main_delta_results_combined} shows that the framework's performance is not very sensitive to this choice. In all the three environments and across both performance metrics, the results often form two distinct clusters: a Pareto-optimal cluster in the top-left, where agents learn to achieve high performance with low communication, and a much larger, dominated cluster in the bottom-right, where agents fail to learn an effective policy. This highlights the importance of selecting an appropriate quantization level. A grid that is too coarse (large $\delta$) may lack expressive capacity, while a grid that is too fine (small $\delta$) may pose a more difficult exploration problem. The existence of the high-performing cluster demonstrates that an effective operating point for $\delta$ can be found, enabling the framework's success.
Ideally, the quantization granularity, $\delta$, should strike a balance between \textbf{expressivity} and \textbf{learning stability}. A grid that is too fine (small $\delta$) offers high precision but can complicate the exploration problem, while a grid that is too coarse (large $\delta$) is easier to learn but may create an information bottleneck that limits peak performance. Our empirical results validate this expectation, revealing that the optimal choice of $\delta$ is highly dependent on the task and agent architecture, as shown in the detailed plots \cref{fig:main_delta_results_combined}. In a relatively simple task like Predator-Prey, performance is robust across a range of $\delta$ values, indicating a wide "sweet spot." However, in more complex environments like Traffic Junction and GRF, we observe that specific architectures achieve a superior performance-efficiency trade-off at a particular intermediate granularity ($\delta=15$), while a finer grid ($\delta=10$) can sometimes yield higher absolute success at the cost of increased communication. This confirms that even slight variations in $\delta$ can meaningfully affect the outcome, and its optimal value depends on the specific informational requirements of the task at hand.

\paragraph{Author Contributions and Use of LLMs.}
% (Optional: You can describe human author contributions here first)
In addition to the contributions of the human authors, we acknowledge the use of large language models as tools for improving this manuscript. The models were prompted to perform specific tasks such as rephrasing sentences for clarity, correcting grammatical errors, and ensuring stylistic consistency. All core scientific contributions, including the formulation of the problem, the development of the method, the analysis of results, and the conclusions drawn, are the original work of the human authors. We have reviewed and edited all model-generated text and assume full responsibility for the accuracy and integrity of this publication.

\end{document}